\def\dOi{11(3:23)2015}
\subjclass{F.4.1, I.2.3, I.1.2}
\newcommand{\f}{\ensuremath{\varphi}}
\newcommand{\p}{\ensuremath{\psi}}
\renewcommand{\a}{\ensuremath{\alpha}}
\renewcommand{\b}{\ensuremath{\beta}}
\newcommand{\si}{\sigma} 
\newcommand{\De}{\Delta}
\newcommand{\Si}{\Sigma}
\newcommand{\imp}{\Rightarrow}
\newcommand{\lang}{{\mathcal L}}
\newcommand{\alg}[1]{\mathbf{#1}}
\newcommand{\pos}[1]{\mathbf{#1}}
\newcommand{\A}{\alg{A}}
\newcommand{\B}{\alg{B}}
\newcommand{\E}{\alg{E}}
\newcommand{\cop}[1]{\mathbb{#1}}
\newcommand{\eq}{\approx}
\newcommand{\leqn}{\preccurlyeq}
\newcommand{\leqnv}{\preccurlyeq_{\V}}
\newcommand{\eleq}{\sqsubseteq}
\newcommand{\lgc}[1]{\mathrm{#1}}
\newcommand{\K}{{\ensuremath{\mathcal{K}}}}
\newcommand{\V}{{\ensuremath{\mathcal{V}}}}
\newcommand{\U}{{\ensuremath{\mathcal{U}}}}
\newcommand{\var}{{\ensuremath{\mathrm{Var}}}}
\newcommand{\FpV}{\Fp_{\V}}
\newcommand{\Fp}{\alg{Fp}}
\newcommand{\tp}{\mathrm{type}}
\newcommand{\cg}{\mathrm{Con}}
\newcommand{\ExUn}{\mathsf{E}}
\newcommand{\EUAlg}{\mathsf{C}}
\renewcommand{\U}[2]{\mathsf{U}_{#1}(#2)}
\newcommand{\FML}{\alg{Fm}_{\lang}(\omega)}
\newcommand{\FreeV}{\alg{F}_{\V}}
\newcommand{\Fml}{\alg{Fm}_{\lang}}
\begin{document}

\title[Exact Unification and Admissibility]{Exact Unification and Admissibility}

\author[L.~M.~Cabrer]{Leonardo M. Cabrer\rsuper a}	
\address{{\lsuper a}Department of Statistics, Computer Science and Applications, University of Florence, Italy}	
\thanks{{\lsuper a}The research of the first author was supported by a Marie Curie Intra European Fellowship within  the 
European Community's Seventh Framework Programme [FP7/2007-2013] under Grant Agreement n.~326202.}	
\email{l.cabrer@disia.unifi.it}  
\author[G.~Metcalfe]{George Metcalfe\rsuper b}	
\address{{\lsuper b}Mathematical Institute, University of Bern,  Switzerland}	
\email{george.metcalfe@math.unibe.ch}  
\thanks{{\lsuper b}The second author acknowledges support from Swiss National Science Foundation grant 200021{\_}146748.}

\keywords{Unification, Admissibility, Equational Class, Free Algebra}


\begin{abstract}
A new hierarchy of ``exact" unification types is introduced, motivated by the study of admissible rules 
for equational classes and non-classical logics. 
In this setting, unifiers of identities in an equational class are preordered, not by instantiation, 
but rather by inclusion over the corresponding sets of unified identities. Minimal complete sets of 
unifiers under this new preordering always have a smaller or equal cardinality than those provided by 
the standard instantiation preordering, and in significant cases a dramatic reduction 
may be observed. In particular, the classes of distributive lattices, idempotent semigroups, and MV-algebras, 
which all have nullary unification type, have unitary or finitary exact type. 
These results are obtained via an algebraic interpretation of exact unification, inspired by Ghilardi's 
algebraic approach to equational unification. 
\end{abstract}


\maketitle

 
\section{Introduction}

It has long been recognized that the study of admissible rules is inextricably related to the theory 
of equational unification (see, e.g.,~\cite{Ryb97,Ghi99,Ghi00}). Indeed, from an algebraic perspective, admissibility 
of clauses in an equational class of algebras  may be understood as a generalization of unifiability of finite sets 
of identities in the class, and 
conversely,  checking admissibility may be reduced to comparing certain sets of unifiers. 
This paper provides a new classification of equational unification problems 
that simplifies these reductions for certain classes, including  distributive lattices, idempotent semigroups, and 
MV-algebras.

Let us fix an equational class of algebras $\V$ for a finite algebraic language 
$\lang$.\footnote{The reader is referred to~\cite{BS81} for basic concepts and results 
from universal algebra.} We denote  the {\em formula algebra} of $\lang$ 
over a set of variables $X$ by $\Fml(X)$ and write $\var(\Si)$ to denote the set of variables occurring in 
a set of $\lang$-identities $\Si$.  A substitution (homomorphism) $\si \colon \Fml(X) \to\FML$ is called a 
{\em $\V$-unifier (over $X$)} of a set of $\lang$-identities $\Si$ with $\var(\Si) \subseteq X$ if  for all $\f \eq \p$ in $\Si$,
\[
\V \models \si(\f) \eq \si(\p).
\] 
In this case, $\Si$ is also said to be {\em $\V$-unifiable}.

A {\em clause} $\Si \imp\De$, defined as an ordered pair $(\Si,\De)$ of finite sets of $\lang$-identities, 
 is called {\em $\V$-admissible} if for each substitution $\si \colon \Fml(\var(\Si \cup \De)) \to\FML$, 
 \[
 \mbox{$\si$ is a $\V$-unifier of $\Si$} 
 \quad \Longrightarrow \quad
  \mbox{$\si$ is a $\V$-unifier of some member of $\De$.} 
 \]
 In particular, $\Si$ is $\V$-unifiable if and only if (henceforth iff) $\Si \imp \emptyset$ is not $\V$-admissible.

Suppose now that the unification type of $\V$ is at most finitary: that is, every $\V$-unifier 
of a set of $\lang$-identities $\Si$ over a finite set $X \supseteq \var(\Si)$  is equivalent in $\V$ to one 
of a finite set $S$ of $\V$-unifiers of $\Si$ over $X$ composed with a further substitution. 
Then any clause $\Si \imp \De$ satisfying $\var(\De) \subseteq X$ 
is $\V$-admissible iff each member  of $S$ is a $\V$-unifier of a member of $\De$. 
If there is an algorithm for determining such a finite basis set $S$ for $\Si$ and the equational theory of $\V$ is decidable, 
then checking $\V$-admissibility is also decidable. This observation, together with the pioneering work of Ghilardi on 
equational unification for classes of Heyting and modal algebras~\cite{Ghi99,Ghi00}, has led to a wealth of decidability, 
complexity, and axiomatization results for admissibility in these classes and corresponding intermediate and modal  
logics~\cite{Iem01,Iem05,Jer05,CM10,BR11a,BR11b,OR13,GI14}.

The success of this approach to admissibility appears to rely on considering varieties with at most finitary unification type. 
That this is not a necessary condition, however,  is illustrated by the case of MV-algebras, the algebraic semantics of \L ukasiewicz 
infinite-valued logic (see~\cite{COM99} for details). Decidability, complexity, and axiomatization results for admissibility in 
MV-algebras have been established by Je{\v r}{\'a}bek~\cite{Jer09a,Jer09b,Jer13b} 
via a similar reduction of finite sets of identities to finite approximating sets of identities.  
On the other hand, it has been shown by Marra and Spada that 
the class of MV-algebras has nullary unification type~\cite{MS13}. This means  that there are finite sets of identities 
for which no finite basis of unifiers exists. Further examples of this discrepancy may be found in~\cite{CM13,MR13}, 
including  the  simple example of the class of distributive lattices where admissibility and validity of clauses coincide 
but unification is nullary. 

As mentioned above, it is possible to check the $\V$-admissibility of a clause $\Si \imp \De$ by checking 
that every $\V$-unifier of $\Si$ in a certain basis set $\V$-unifies some member of $\De$. Such a basis set $S$ 
typically has the property that every other $\V$-unifier of $\Si$ is obtained, modulo equivalence in $\V$, 
by applying a further substitution to a 
member of $S$. The starting point for this paper is the observation that a weaker condition on $S$ suffices, 
leading potentially to smaller basis sets of $\V$-unifiers. For checking $\V$-admissibility, it is enough  
that any $\V$-unifier of $\Si$ over  a finite set  $X \supseteq \var(\Si)$ is a $\V$-unifier of all identities  with 
variables in $X$ that are $\V$-unified by some particular member of $S$. Then  $\Si \imp \De$ with 
$\var(\De) \subseteq X$ is $\V$-admissible iff each member of $S$ is a $\V$-unifier of a member of $\De$. 
This observation leads to a new preordering of $\V$-unifiers and hierarchy of ``exact'' unification types. 

We also provide here an algebraic characterization of exact unification, where finite sets of identities are 
represented by finitely presented algebras. In Ghilardi's algebraic account of (standard) 
unification, unifiers are homomorphisms 
from finitely presented algebras into projective algebras of the class, preordered by composition of  
homomorphisms~\cite{Ghi97}. Coexact unifiers are defined here as homomorphisms from finitely presented algebras onto 
 algebras that embed into the $\omega$-generated free algebra of the class; 
the preordering remains the same. This contrasts with the syntactic account of exact unification where the 
unifiers are unchanged but a new preorder is introduced. Nevertheless, the syntactic and algebraic exact 
unification types coincide as in the standard unification setting.

Although certain equational classes have the same exact type as unification type (e.g., any equational class of 
unitary unification type will have unitary exact type),  we also obtain examples where the exact type is strictly smaller.  
In particular, the classes of distributive lattices and Stone algebras have nullary unification type but unitary exact type, while the classes of 
idempotent semigroups, pseudo-complemented distributive lattices, Kleene algebras, De Morgan algebras, 
and MV-algebras all have nullary unification type but finitary exact type. 
We also provide an example (due to R.~Willard) of an equational class that has infinitary unification type but finitary exact type.\footnote{
Another alternative hierarchy of unification types is obtained by considering left and right substitutions and so-called 
essential unifiers~\cite{HS06}. Although some of the advantages of this hierarchy are shared by our approach (e.g., the type of 
idempotent semigroups is in both cases finitary, contrasting with the fact that the unification type is nullary), the preordering 
for essential unification is different to the preordering presented here and not suited to reasoning about admissibility.}

We proceed as follows. In Section~\ref{s:equational_unification_and_admissibility}, we recall some standard notions of equational 
unification and admissibility, and describe Ghilardi's algebraic account of unification types. In Section~\ref{s:exact}, 
we introduce the new exact unification preordering and exact  types, providing also an algebraic interpretation and 
applications. Several cases studies are considered in Section~\ref{s:case_studies} 
and some ideas for further research are presented in Section~\ref{s:concluding_remarks}.


\section{Equational Unification and Admissibility}\label{s:equational_unification_and_admissibility}

In this section, we describe briefly some key notions from the theory of 
equational unification (referring to~\cite{BS01} for further details) 
and their relevance to the study of admissible rules. In particular, we recall 
the unification type of a finite set of identities in an equational class and 
the algebraic account of unification provided by Ghilardi in~\cite{Ghi97}. 
These notions and also those to appear in subsequent sections are most elegantly 
presented in the general setting of preordered sets.

Let $\pos{P} = \langle P, \le \rangle$ be a preordered set (i.e., $\le$ is a reflexive and transitive binary relation on $P$). 
A {\it complete} set for $\pos{P}$ is a subset $M \subseteq P$ such that for every $x \in P$, there exists $y \in M$ satisfying 
$x \le y$. A complete set $M$ for $\pos{P}$ is called a {\it $\mu$-set} for $\pos{P}$ if  $x \not \le y$ and $y \not \le x$ 
for all distinct $x,y \in M$. It is easily seen that if $\pos{P}$ has a $\mu$-set, then every $\mu$-set of $\pos{P}$ has the 
same cardinality. Hence $\pos{P}$ may be said to be {\em nullary} if it  has no $\mu$-sets ($\tp(\pos{P}) = 0$), {\em infinitary} if it has 
a $\mu$-set of infinite cardinality ($\tp(\pos{P}) = \infty$), {\em finitary} if it has a finite $\mu$-set of cardinality greater than $1$ 
($\tp(\pos{P}) = \omega$), and {\em unitary} if it has a $\mu$-set of cardinality~1 ($\tp(\pos{P}) = 1$). These types are ordered as 
follows: $1<\omega<\infty<0$. 

The following trivial but helpful observation confirms that the type of a preordered set depends only on 
its corresponding quotient poset.

\begin{lem}[{\cite[Lemma~2.1]{Ghi97}}]\label{Lemma:EquivPreorder}
Suppose that two preordered sets $\langle P, \le \rangle$ and  $\langle Q, \le \rangle$ 
are equivalent: i.e., there exists a map $e \colon P \to Q$ such that
\begin{enumerate}
\item	for each $q \in Q$, there is a $p \in P$ such that $e(p) \le q$ and $q \le e(p)$, and\smallskip
\item]	for all $p_1,p_2 \in P$, $p_1 \le p_2$ iff $e(p_1) \le e(p_2)$.
\end{enumerate}
Then $\langle P, \le \rangle$ and  $\langle Q, \le \rangle$ have the same type.
\end{lem}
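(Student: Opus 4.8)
The plan is to reduce everything to the preservation of $\mu$-sets, exploiting the fact (noted just before the lemma) that all $\mu$-sets of a fixed preordered set share a single cardinality. Thus the type is completely determined by whether a $\mu$-set exists and, when it does, by that common cardinality. So it suffices to show that $\langle P, \le\rangle$ has a $\mu$-set of cardinality $\kappa$ iff $\langle Q, \le\rangle$ does, which I would do by transporting $\mu$-sets back and forth along $e$. In the forward direction, given a $\mu$-set $M \subseteq P$, I claim $e(M)$ is a $\mu$-set for $\langle Q, \le\rangle$. Completeness follows from (1) and (2): for $q \in Q$ pick $p \in P$ with $e(p)$ equivalent to $q$, then pick $x \in M$ with $p \le x$, so that $e(p) \le e(x)$ by (2) and hence $q \le e(x) \in e(M)$. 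The antichain condition is immediate from (2), since distinct $x,y \in M$ satisfy $x \not\le y$ and $y \not\le x$, giving $e(x) \not\le e(y)$ and $e(y) \not\le e(x)$.

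For the backward direction, given a $\mu$-set $N \subseteq Q$, I would use (1) to select, for each $q \in N$, an element $p_q \in P$ with $e(p_q)$ equivalent to $q$, and set $M = \{p_q : q \in N\}$. Completeness of $M$ holds because for any $x \in P$ there is $q \in N$ with $e(x) \le q \le e(p_q)$, whence $x \le p_q$ by (2). For the antichain condition, if $q_1 \ne q_2$ in $N$ then $q_1 \not\le q_2$, and $p_{q_1} \le p_{q_2}$ would yield $q_1 \le e(p_{q_1}) \le e(p_{q_2}) \le q_2$ via (2) and the equivalences, a contradiction; the symmetric case is identical.

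The only genuinely delicate point — and the one I would flag as the main obstacle — is the preservation of \emph{cardinality}, not merely of the existence of a $\mu$-set. In each direction this requires that the assignment be injective on the $\mu$-set: in the forward direction that $e$ separates distinct elements of $M$, and in the backward direction that $q \mapsto p_q$ separates distinct elements of $N$. In both cases injectivity is extracted from the incomparability established above, using only reflexivity of the preorder (if two images coincided, reflexivity would force a comparability that contradicts the antichain property). It is worth stressing that $\le$ is a preorder rather than a partial order, so this reflexivity argument, rather than antisymmetry, is what does the work. Granting this, the two constructions give $|e(M)| = |M|$ and $|M| = |N|$ respectively, so $P$ and $Q$ admit $\mu$-sets of exactly the same cardinalities; hence they are simultaneously nullary or share a common type among $1, \omega, \infty$, and therefore $\tp(\langle P, \le\rangle) = \tp(\langle Q, \le\rangle)$.
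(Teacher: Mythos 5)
Your proof is correct. Note that the paper itself supplies no proof of this lemma: it is quoted as \cite[Lemma~2.1]{Ghi97} and dismissed as a ``trivial but helpful observation,'' so there is no in-paper argument to compare against. Your transport-of-$\mu$-sets argument is exactly the standard way to fill in the details, and you correctly identify and handle the one genuinely delicate point, namely that the type depends on the \emph{cardinality} of a $\mu$-set, so that $e$ restricted to $M$ (respectively the choice map $q \mapsto p_q$ on $N$) must be shown injective; your observation that this follows from reflexivity together with the antichain condition--rather than from any antisymmetry, which a preorder lacks--is precisely right. Both directions of the transport, the appeal to condition (2) as a biconditional for the antichain property, and the use of transitivity through the equivalence $e(p) \le q \le e(p)$ in the completeness checks are all sound, so the conclusion that $P$ and $Q$ admit $\mu$-sets of exactly the same cardinalities (and hence share a type, including the nullary case by contraposition) follows as you state.
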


We turn our attention now to the syntactic account of equational unification. 
Let us fix a finite algebraic language $\lang$ and an equational class  $\V$ of $\lang$-algebras 
(equivalently, a variety: a class of $\lang$-algebras closed under taking products, subalgebras, and homomorphic 
images).\footnote{The results of this paper also hold for quasi-equational classes and, more generally, 
any class of algebras that contains finitely presented algebras for all finite presentations (equivalently, 
{\em prevarieties}: classes closed under taking products, subalgebras, and isomorphic images~\cite{Gor98}). 
However, as  the vast majority of cases considered in 
the literature are equational classes, we restrict our account to this slightly simpler setting.} 
Consider  a finite set  $X \subseteq \omega$ and substitutions $\si_i \colon \Fml(X) \to\FML$ for $i = 1,2$. 
We say that $\si_1$ is {\em more general than $\si_2$ in $\V$}, written 
$\si_2 \leqnv \si_1$, if there exists a substitution $\tau \colon \FML\to \FML$ such that 
$\V \models \tau(\si_1(x)) \eq \si_2(x)$ for all $x \in X$.

Let $\Si$ be a finite set of $\lang$-identities 
and  let $X\supseteq \var(\Si)$ be a finite set of variables. Then $\U{\V}{\Si,X}$ is defined as the set of 
$\V$-unifiers of $\Si$ over $X$ preordered by $\leqnv$, and we let $\U{\V}{\Si} = \U{\V}{\Si,\var(\Si)}$. 
Note also that, trivially, $\U{\V}{\Si,X} = \U{\V}{\Si \cup \{x \eq x \mid x \in X\}}$.

For $\U{\V}{\Si} \not = \emptyset$,  the {\em $\V$-unification type} of 
$\Si$ is defined as  $\tp(\U{\V}{\Si})$.  The {\em unification type of $\V$} is then the maximal type of a $\V$-unifiable 
finite set $\Si$ of $\lang$-identities.

\begin{exa}\label{e:equnif}
Equational unification has been studied for a wide range of equational classes. 
For {\em syntactic unification}, where $\V$ is the class of all $\lang$-algebras, 
every unifiable finite set $\Si$ of $\lang$-identities has  a most general unifier; 
that is, syntactic unification is unitary (see, e.g.,~\cite{BS01}). The class of {\em Boolean algebras} 
is also unitary~\cite{BS87}: 
if $\{\f \eq \p\}$  has a unifier $\si_0$, then $\si(x) = (\lnot (\f + \p) \land x) \lor ((\f + \p) \land \si_0(x))$  for each variable $x$ 
(where $+$ is the symmetric difference operation)  defines a most general unifier. 
The class of {\em Heyting algebras} is not unitary; e.g., 
$\{x \lor y \eq \top\}$ has a $\mu$-set of unifiers $\{\si_1,\si_2\}$ where $\si_1(x) = \top$, $\si_1(y) = y$, 
$\si_2(x) = x$, $\si_2(y) = \top$. However,  this class is finitary~\cite{Ghi99}. More problematically, 
the class of {\em semigroups} is infinitary~\cite{Plo72}; e.g., $\{x \cdot y \eq y \cdot x\}$ has  a $\mu$-set 
$\{\si_{m,n} \mid \gcd(m,n)= 1\}$ where $\si_{m,n}(x) = z^m$ and $\si_{m,n}(y) = z^n$. 
Moreover, many familiar classes of algebras are nullary; e.g., in  
the class of {\em distributive lattices} (see~\cite{Ghi99}), 
$\{x \land y \eq z \lor w\}$ has no $\mu$-set. Other nullary classes of algebras 
include {\em idempotent semigroups}~\cite{Ba86}, {\em pseudo-complemented 
distributive lattices}~\cite{Ghi97},  {\em MV-algebras}~\cite{MS13}, and {\em modal algebras} 
for the logic $\lgc{K}$~\cite{Jer13}.
\end{exa}

Let us now recall Ghilardi's algebraic account of equational unification~\cite{Ghi97}. 
We denote  the free $\lang$-algebra of $\V$ over a set of 
variables $X$ by $\FreeV(X)$ and let $h_\V \colon \Fml(X) \to \FreeV(X)$ be the canonical homomorphism 
acting as the identity on $X$. 
Given a finite set of $\lang$-identities $\Si$ and a finite set of variables $X \supseteq \var(\Si)$, we denote by $\FpV(\Si,X)$, 
the algebra in $\V$ finitely presented by $\Si$ and $X$: that is, the quotient algebra $\FreeV(X) / \Theta_\Si$ where 
$\Theta_\Si$ is the congruence on $\FreeV(X)$ generated by the set $\{(h_\V(\f),h_\V(\p)) \mid \f \eq \p \in \Si\}$. We also let
$\mathsf{FP}(\V)$ denote the class of finitely presented algebras of $\V$.

Given $\A \in \mathsf{FP}(\V)$, a homomorphism $u \colon \A \to \B$ is called a {\em unifier} of $\A$ if 
 $\B \in \mathsf{FP}(\V)$ is {\em projective} in $\V$: that is, there exist homomorphisms $\iota \colon \B \to \FreeV(\omega)$ and 
 $\rho \colon \FreeV(\omega) \to \B$ such that $\rho \circ \iota$ is the identity map on $B$. Let $u_i \colon \A \to \B_i$ for $i=1,2$ be unifiers 
for $\A$. Then $u_1$ is {\em more general than} $u_2$, written $u_2 \le u_1$, if there exists a homomorphism 
$f \colon \B_1 \to \B_2$ such that $f \circ u_1 = u_2$.

Let $\U{\V}{\A}$ be the set of unifiers of $\A \in \mathsf{FP}(\V)$ preordered by $\le$.  
For $\U{\V}{\A}\neq\emptyset$, the {\em unification type of $\A$ in $\V$} is 
defined as $\tp(\U{\V}{\A})$ and the {\em algebraic unification type} of $\V$ is the maximal type of 
$\A$ in $\mathsf{FP}(\V)$ such that $\U{\V}{\A}\neq\emptyset$.

\begin{thm}[{\cite[Theorem~4.1]{Ghi97}}]
Let $\V$ be an equational class and let $\Si$ be a finite $\V$-unifiable set of $\lang$-identities. 
Then for any finite set of variables $X\supseteq \var(\Si)$:
\[
\tp\bigl(\U{\V}{\Si,X}\bigr) = \tp\bigl(\U{\V}{\FpV(\Si,X)}\bigr).
\]
Hence the algebraic unification type of $\V$ coincides with the unification type of $\V$. 
\end{thm}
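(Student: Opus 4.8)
The plan is to apply Lemma~\ref{Lemma:EquivPreorder}: it suffices to construct a map $e \colon \U{\V}{\Si,X} \to \U{\V}{\FpV(\Si,X)}$ witnessing these two preordered sets as equivalent. Write $\A = \FpV(\Si,X) = \FreeV(X)/\Theta_\Si$, let $q \colon \FreeV(X) \to \A$ be the quotient map, and set $a_x = q(h_\V(x))$ for the generators of $\A$. Given a $\V$-unifier $\si \colon \Fml(X) \to \FML$ of $\Si$, I would let $Y$ be the (finite) set of variables occurring in the terms $\si(x)$, $x \in X$. By freeness of $\FreeV(X)$, the assignment $x \mapsto h_\V(\si(x)) \in \FreeV(Y)$ extends to a homomorphism $\hat\si \colon \FreeV(X) \to \FreeV(Y)$; since $\si$ unifies $\Si$ we have $h_\V(\si(\f)) = h_\V(\si(\p))$ for each $\f \eq \p \in \Si$, so $\hat\si$ collapses the generators of $\Theta_\Si$ and factors through $q$ as $\bar\si \colon \A \to \FreeV(Y)$. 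As $Y$ is finite, $\FreeV(Y)$ is finitely presented and projective, so $\bar\si$ is an algebraic unifier of $\A$; I set $e(\si) = \bar\si$.

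For condition~(2) I would prove $\si_2 \leqnv \si_1 \iff \bar\si_2 \le \bar\si_1$ by translating the witnessing maps back and forth, throughout exploiting that $h_\V$ is surjective and that its kernel records exactly the pairs of terms equated by $\V$, so that a substitution and the homomorphism it induces on free algebras carry the same data. Given $\tau$ with $\V \models \tau(\si_1(x)) \eq \si_2(x)$, the induced homomorphism $\FreeV(Y_1) \to \FreeV(\omega)$ sends $\bar\si_1(a_x) = h_\V(\si_1(x))$ to $h_\V(\si_2(x)) = \bar\si_2(a_x) \in \FreeV(Y_2)$; composing with a retraction $\FreeV(\omega) \to \FreeV(Y_2)$ yields $f \colon \FreeV(Y_1) \to \FreeV(Y_2)$ with $f \circ \bar\si_1 = \bar\si_2$. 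Conversely, from such an $f$, choosing for each $y \in Y_1$ a term $\tau(y) \in \Fml(Y_2)$ with $h_\V(\tau(y)) = f(y)$ produces a substitution $\tau$ with $\V \models \tau(\si_1(x)) \eq \si_2(x)$, so $\si_2 \leqnv \si_1$.

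The crux is condition~(1): every algebraic unifier must be equivalent to one in the image of $e$. Given $u \colon \A \to \B$ with $\B$ finitely presented and projective, I would first use projectivity together with finite generation of $\B$ to realise $\B$ as a retract of a free algebra on finitely many variables, obtaining a finite $Z$ and homomorphisms $\iota \colon \B \to \FreeV(Z)$, $\rho \colon \FreeV(Z) \to \B$ with $\rho \circ \iota = \mathrm{id}$. Reading off term representatives of the elements $\iota(u(a_x)) \in \FreeV(Z)$ defines a substitution $\si$ with $h_\V(\si(x)) = \iota(u(a_x))$; since $u \circ q$ equates $h_\V(\f)$ and $h_\V(\p)$ for $\f \eq \p \in \Si$, it follows that $\V \models \si(\f) \eq \si(\p)$, so $\si$ is a genuine $\V$-unifier. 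By construction $\iota \circ u = \bar\si$ after identifying $\FreeV(Y)$ with a subalgebra of $\FreeV(Z)$, so $\iota$ and $\rho$ respectively witness $e(\si) \le u$ and $u \le e(\si)$, and a further retraction reconciling the codomains $\FreeV(Y)$ and $\FreeV(Z)$ shows $e(\si)$ and $u$ are equivalent. This is the step on which the whole argument turns on projectivity, and the expected obstacle is purely bookkeeping: keeping track of the several free algebras and their retractions so that each comparing homomorphism has the correct domain and codomain.

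With both conditions verified, Lemma~\ref{Lemma:EquivPreorder} gives $\tp(\U{\V}{\Si,X}) = \tp(\U{\V}{\A})$. For the final claim I would note that, by definition, every algebra in $\mathsf{FP}(\V)$ is isomorphic to some $\FpV(\Si,X)$ with $\Si$ and $X$ finite, and that by the correspondence above $\U{\V}{\Si,X}$ is nonempty iff $\U{\V}{\FpV(\Si,X)}$ is, i.e.\ $\Si$ is $\V$-unifiable iff $\A$ has a unifier. Taking the maximum of $\tp$ over these two matched families of nonempty unifier preorders then yields the equality of the algebraic unification type of $\V$ and the unification type of $\V$.
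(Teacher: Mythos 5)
Your proposal is correct, and it follows essentially the canonical approach: the paper itself does not reprove this statement (it cites \cite{Ghi97}), but your strategy --- mapping each syntactic unifier $\si$ to the induced homomorphism $\bar\si \colon \FpV(\Si,X) \to \FreeV(Y)$ and verifying conditions (1) and (2) of Lemma~\ref{Lemma:EquivPreorder} --- is exactly Ghilardi's argument, and is also the same scheme the paper uses for its exact-unification analogue, Theorem~\ref{Theo:EqualTypes}. The retraction bookkeeping between $\FreeV(\omega)$, $\FreeV(Z)$, and $\FreeV(Y)$ that you identify as the main obstacle is handled correctly, so there is no gap.
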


Let us see now how these ideas relate to the notion of admissibility defined in the introduction. 
Recall that the {\em kernel} of a homomorphism $h \colon \A \to \B$  is defined as 
\[
\ker(h) = \{(a,b) \in A^2 \mid h(a)=h(b)\}.
\]
In what follows, we will freely identify $\lang$-identities with pairs of $\lang$-formulas. We will also say that an  
$\lang$-clause $\Si \imp \De$ is {\em valid} in a class of $\lang$-algebras $\K$, written $\K \models \Si \imp \De$, 
if the universal sentence $(\forall \bar{x})(\bigwedge \Si \imp \bigvee \De)$ is valid in each algebra in $\K$. 
The admissibility of an $\lang$-clause can then be reformulated as follows:

\begin{lem}\label{Lemma:Crucial}
Let $\Si\cup\De$ be a finite set of $\lang$-identities with $\var(\Si \cup \De) \subseteq X$. 
Then the following  are equivalent:
\begin{enumerate}[label=(\roman*)]
\item		$\Si\imp \De$ is $\V$-admissible.
\item 	For each substitution $\si\colon \Fml(X) \to \FML$ such that $\Si\subseteq \ker(h_{\V}\circ\si)$, 
			\[
			\De\cap\ker(h_{\V}\circ\si)\neq\emptyset.
			\]
\item	$\FreeV(\omega)\models \Si\imp \De$. 
\end{enumerate}
If in particular $\De=\{ \f\eq\p\}$, then (i)-(iii) above are also equivalent to
\begin{enumerate}[label=(\roman*),start=4]
\item[(iv)] 	$(\f,\p)\in\bigcap \{\ker(h_{\V}\circ \si)\mid \si\colon \Fml(X) \to \FML\mbox{ and }\Si\subseteq \ker(h_{\V}\circ\si)\}$.
\end{enumerate}
\end{lem}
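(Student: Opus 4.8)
The plan is to reduce all four conditions to statements about kernels by means of a single dictionary. For a substitution $\si \colon \Fml(X) \to \FML$ and an identity $\f \eq \p$ with variables in $X$, I claim
\[
\si \text{ is a } \V\text{-unifier of } \f \eq \p \quad\Longleftrightarrow\quad (\f,\p) \in \ker(h_\V \circ \si).
\]
This is immediate from the defining property of the free algebra: an identity $s \eq t$ between members of $\FML$ holds throughout $\V$ exactly when it holds in $\FreeV(\omega)$, i.e.\ when $h_\V(s) = h_\V(t)$; applying this with $s = \si(\f)$ and $t = \si(\p)$ gives the equivalence. Consequently ``$\si$ $\V$-unifies $\Si$'' reads ``$\Si \subseteq \ker(h_\V \circ \si)$'', and ``$\si$ $\V$-unifies some member of $\De$'' reads ``$\De \cap \ker(h_\V \circ \si) \neq \emptyset$''.

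With this dictionary, (ii) is word-for-word the defining condition for $\V$-admissibility, save that its substitutions range over the possibly larger domain $\Fml(X)$ rather than $\Fml(\var(\Si \cup \De))$. To obtain (i)\,$\Leftrightarrow$\,(ii) I would observe that whether $\si$ unifies an identity of $\Si \cup \De$ depends only on the values $\si$ takes on $\var(\Si \cup \De)$: every $\si \colon \Fml(\var(\Si \cup \De)) \to \FML$ extends to $\Fml(X)$ (say, by the identity on the remaining variables) and every $\si \colon \Fml(X) \to \FML$ restricts back, with the implications in (i) and (ii) preserved under passage between $\si$ and its restriction. Hence quantifying over the two domains yields the same condition.

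For (ii)\,$\Leftrightarrow$\,(iii) I would unwind the semantics of clause validity. By definition $\FreeV(\omega) \models \Si \imp \De$ means that for every valuation of $X$ in $\FreeV(\omega)$, extended to a homomorphism $g \colon \Fml(X) \to \FreeV(\omega)$, the inclusion $\Si \subseteq \ker(g)$ forces $\De \cap \ker(g) \neq \emptyset$. The single point needing verification is that the homomorphisms $g$ arising from valuations are precisely the maps $h_\V \circ \si$ for substitutions $\si \colon \Fml(X) \to \FML$: since $h_\V \colon \FML \to \FreeV(\omega)$ is onto, each generator can be lifted along $h_\V$ to manufacture a suitable $\si$, while conversely every $h_\V \circ \si$ is such a homomorphism. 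Under this identification (ii) and (iii) are the same assertion. Finally, when $\De = \{\f \eq \p\}$ the condition $\De \cap \ker(h_\V \circ \si) \neq \emptyset$ collapses to $(\f,\p) \in \ker(h_\V \circ \si)$, so (ii) says exactly that $(\f,\p)$ lies in $\ker(h_\V \circ \si)$ for every $\si$ with $\Si \subseteq \ker(h_\V \circ \si)$; this is membership in the intersection displayed in (iv).

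The proof is thus a chain of faithful translations between the unifier and kernel vocabularies, and the only step that is more than bookkeeping is the correspondence in (ii)\,$\Leftrightarrow$\,(iii): I must make sure that ranging over all substitutions into the formula algebra realizes all valuations into $\FreeV(\omega)$, which rests on the surjectivity of $h_\V$ together with the fact that identities are decided by the free algebra. I expect this to be the main, though still mild, obstacle.
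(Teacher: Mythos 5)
Your proposal is correct and takes essentially the same route as the paper's proof: the same dictionary identifying ``$\si$ $\V$-unifies $\f\eq\p$'' with $(\f,\p)\in\ker(h_\V\circ\si)$ via the free-algebra fact, and the same lifting of homomorphisms $g\colon \Fml(X)\to\FreeV(\omega)$ along the surjection $h_\V$ to substitutions $\si$ with $h_\V\circ\si=g$; the only differences are organizational (the paper proves (i)$\Leftrightarrow$(ii), (i)$\Rightarrow$(iii), (iii)$\Rightarrow$(ii), while you prove (i)$\Leftrightarrow$(ii) and (ii)$\Leftrightarrow$(iii) directly). If anything, your explicit treatment of the domain discrepancy between $\Fml(\var(\Si\cup\De))$ and $\Fml(X)$ in (i)$\Leftrightarrow$(ii) is a small detail the paper leaves implicit.
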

\begin{proof}
We give a proof here of this well known equivalence 
(see, e.g.,~\cite{Ryb97,MR13})  for the sake of completeness.

(i)$\Leftrightarrow$(ii) Recall (see~\cite[Corollary II.11.6]{BS81}) that, for each 
$\lang$-identity $\f \eq \p$:
\[
\V \models \f \eq \p \qquad \Longleftrightarrow \qquad \FreeV(\omega) \models \f \eq \p  \qquad \Longleftrightarrow 
\qquad h_\V(\f) = h_\V(\p). 
\]
Hence a  substitution $\si\colon \Fml(X) \to \FML$ satisfies $\Si\subseteq \ker(h_{\V}\circ\si)$ (i.e., 
$h_{\V}(\si(\f)) = h_{\V}(\si(\p))$ for all $\f \eq \p \in \Si$) iff $\V \models \si(\f) \eq \si(\p)$  for all $\f \eq \p \in \Si$, 
that is, iff $\si$ is a $\V$-unifier of $\Si$. Similarly, $\De\cap\ker(h_{\V}\circ\si)\neq\emptyset$ iff $\si$ is a
$\V$-unifier of some member of $\De$. So (ii) holds iff $\Si\imp \De$ is $\V$-admissible.

(i)$\Rightarrow$(iii) 
Suppose that $\Si \imp \De$ is $\V$-admissible and let $g \colon \FML \rightarrow \FreeV(\omega)$ be a 
homomorphism such that $\Si \subseteq \ker g$. Let $\si$ be a map 
sending each variable $x$ to a member of the equivalence class $g(x)$. By the universal mapping property for 
$\FML$, this extends to a homomorphism $\si \colon \FML \rightarrow \FML$. 
But $h_\V (\si(x)) = g(x)$ for each variable $x$, so $h_\V \circ \si = g$. 
Hence, for each  $\f' \eq \p' \in \Si$, it holds that $h_\V(\si(\f')) =  h_\V(\si(\p'))$ and therefore $\V \models\si(\f') \eq \si(\p')$. 
So $\si$ is a $\V$-unifier of $\Si$ and, by assumption, $\V \models \si(\f) \eq \si(\p)$ for some $\f\eq \p\in\De$. 
It follows that  
$g(\f) = h_\V(\si(\f)) = h_\V(\si(\p)) = g(\p)$ as required.

(iii)$\Rightarrow$(ii)
 Consider a substitution $\si\colon \Fml(X) \to \FML$ such that  
 $\Si\subseteq \ker(h_{\V}\circ\si)$; that is, $\V \models \si(\f) \eq \si(\p)$ for all $\f \eq \p \in \Si$. 
So $\FreeV(\omega)\models \si(\f) \eq \si(\p)$ for all $\f \eq \p \in \Si$.
 By assumption, there exists $\f\eq\p\in\De$ such that  $\FreeV(\omega)\models \si(\f)\eq\si(\p)$. But then 
 also  $\V \models \si(\f)\eq\si(\p)$ and, as required, $(\f,\p)\in\ker(h_{\V}\circ\si)\cap \De$.

If $\De=\{\f\eq\p\}$, then (ii) is clearly equivalent to (iv).
\end{proof}

Suppose now that $\V$ is any equational class of $\lang$-algebras and that $\Si$ and $\De$ are finite sets 
of $\lang$-identities. Given any complete set $S$
for $\U{\V}{\Si,\var(\Si \cup \De)}$,  it follows directly that
\[
\Si \imp \De \mbox{ is $\V$-admissible}  \quad \Longleftrightarrow \quad 
\mbox{each $\si \in S$ is a $\V$-unifier of some $\f \eq \p \in \De$.}
\]
Moreover, if $\V$ is unitary or finitary and there exists an algorithm for finding finite complete sets of unifiers, 
then checking admissibility in $\V$ is decidable whenever the equational theory of $\V$ is decidable. 
There are, however, important equational classes having infinitary or nullary unification type for which 
such a method is unavailable. The starting point for the new approach described below is the observation 
that the above equivalence can hold even for a set $S$ that is not complete for the $\leqn_\V$-preordered 
set of $\V$-unifiers. It suffices rather that each $\si \in \U{\V}{\Si,\var(\Si \cup \De)}$ is a $\V$-unifier 
of all the identities $\V$-unified by some particular member of~$S$.


\section{Exact Unification}\label{s:exact}

We begin by defining a new preorder on substitutions relative to a fixed equational class of $\lang$-algebras $\V$. 
Let  $X$ be a finite set of variables and let $\si_i \colon \Fml(X)\to\FML$ 
be substitutions for $i=1,2$.  We write $\si_2\eleq_{\V} \si_1$ if all identities $\V$-unified by $\si_1$ are $\V$-unified by $\si_2$. 
More precisely:
\[
\si_2 \eleq_{\V} \si_1 \quad \Longleftrightarrow \quad  \ker(h_{\V}\circ\si_1)\subseteq\ker(h_{\V}\circ\si_2).
\]
Clearly, $\eleq_\V$ is a preorder on substitutions of the form $\si \colon \Fml(X) \to\FML$. 

\begin{lem}\label{l:easier}
For any finite set $X$ and substitutions $\si_i \colon \Fml(X)\to\FML$  for $i=1,2$:
\[
\si_2\leqnv \si_1 \quad \Longrightarrow \quad \si_2\eleq_{\V} \si_1.
\]
Moreover, if  $h_\V \circ \si_1 \circ \si_1 = h_\V \circ \si_1$, then
\[
\si_2\leqnv \si_1 \quad \Longleftrightarrow \quad \si_2\eleq_{\V} \si_1.
\]
\end{lem}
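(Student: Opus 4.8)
The plan is to read both statements purely in terms of the kernels $\ker(h_{\V}\circ\si_i)$ and to prove the two implications by short kernel computations, the second relying on a single explicit choice of witnessing substitution. The one auxiliary fact I would record at the outset is that $\ker(h_{\V})$, as a congruence on $\FML$, is \emph{fully invariant}: by the characterization $\V\models\a\eq\b$ iff $h_{\V}(\a)=h_{\V}(\b)$ recalled in the proof of Lemma~\ref{Lemma:Crucial}, a $\V$-valid identity stays $\V$-valid under any substitution, so $h_{\V}(\a)=h_{\V}(\b)$ implies $h_{\V}(\rho(\a))=h_{\V}(\rho(\b))$ for every substitution $\rho\colon\FML\to\FML$. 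Equivalently, each $\rho$ descends to an endomorphism of $\FreeV(\omega)$.

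For the first implication, assume $\si_2\leqnv\si_1$, witnessed by $\tau\colon\FML\to\FML$ with $\V\models\tau(\si_1(x))\eq\si_2(x)$ for all $x\in X$. Since $h_{\V}\circ\tau\circ\si_1$ and $h_{\V}\circ\si_2$ are homomorphisms agreeing on the generators $X$, they agree on all of $\Fml(X)$, giving $h_{\V}\circ\tau\circ\si_1=h_{\V}\circ\si_2$. Now take $(\f,\p)\in\ker(h_{\V}\circ\si_1)$, i.e.\ $h_{\V}(\si_1(\f))=h_{\V}(\si_1(\p))$. Full invariance applied to $\tau$ yields $h_{\V}(\tau(\si_1(\f)))=h_{\V}(\tau(\si_1(\p)))$, and rewriting both sides through $h_{\V}\circ\tau\circ\si_1=h_{\V}\circ\si_2$ gives $h_{\V}(\si_2(\f))=h_{\V}(\si_2(\p))$. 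Thus $(\f,\p)\in\ker(h_{\V}\circ\si_2)$, so $\ker(h_{\V}\circ\si_1)\subseteq\ker(h_{\V}\circ\si_2)$, which is exactly $\si_2\eleq_{\V}\si_1$.

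For the \emph{moreover} part the first implication is already in hand, so it remains to show that $\si_2\eleq_{\V}\si_1$ and $h_{\V}\circ\si_1\circ\si_1=h_{\V}\circ\si_1$ together force $\si_2\leqnv\si_1$. Here I would simply take $\tau:=\si_2$. The idempotency hypothesis says that for every $\f$, the pair $(\si_1(\f),\f)$ satisfies $h_{\V}(\si_1(\si_1(\f)))=h_{\V}(\si_1(\f))$, i.e.\ $(\si_1(\f),\f)\in\ker(h_{\V}\circ\si_1)$. By $\si_2\eleq_{\V}\si_1$ this pair also lies in $\ker(h_{\V}\circ\si_2)$, so $h_{\V}(\si_2(\si_1(\f)))=h_{\V}(\si_2(\f))$; that is, $h_{\V}\circ\si_2\circ\si_1=h_{\V}\circ\si_2$. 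Specializing to the generators gives $\V\models\si_2(\si_1(x))\eq\si_2(x)$ for all $x\in X$, which is precisely the assertion that $\tau=\si_2$ witnesses $\si_2\leqnv\si_1$.

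The point requiring care is the meaning of the composite $\si_1\circ\si_1$ and, correspondingly, the provenance of the pairs $(\si_1(\f),\f)$. Since the substitutions are nominally typed $\Fml(X)\to\FML$, I would fix the convention that a unifier over $X$ is treated as an endomorphism of $\Fml(X)$ (extended to $\FML$ by the identity only when it is applied on the outside), so that $\si_1(\f),\f\in\Fml(X)$ and the pairs produced by idempotency genuinely belong to $\ker(h_{\V}\circ\si_1)\subseteq\Fml(X)^2$, the domain on which $\eleq_{\V}$ is defined. With this reading the argument above is complete; the main thing to verify is that this convention is consistent with the definitions of $\leqnv$ and $\eleq_{\V}$ given earlier, and in particular that $\tau=\si_2$ is a permissible choice of the outer substitution $\tau\colon\FML\to\FML$.
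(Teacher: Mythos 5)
Your proof is correct and follows essentially the same route as the paper's: the first implication via the witnessing substitution $\tau$ and the invariance of $\V$-valid identities under substitution, and the converse (under idempotency) by taking $\tau=\si_2$ and observing that the pairs $(\si_1(x),x)$ lie in $\ker(h_\V\circ\si_1)\subseteq\ker(h_\V\circ\si_2)$. Your closing remark on the typing convention for $\si_1\circ\si_1$ merely makes explicit a convention the paper leaves implicit, so it is a refinement rather than a different argument.
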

\begin{proof}
Suppose that $\si_2\leqnv \si_1$. Then there exists a substitution  $\tau \colon \FML\to\FML$ such that 
$h_\V \circ \tau \circ \si_1 = h_\V \circ \si_2$. Consider $(\f,\p) \in \ker(h_{\V}\circ\si_1)$; i.e., 
$h_\V(\si_1(\f)) = h_\V (\si_1(\p))$. Then, since $\V \models \si_1(\f) \eq \si_1(\p)$ implies 
$\V \models \tau(\si_1(\f)) \eq \tau(\si_1(\p))$, also
\[
h_\V (\si_2(\f)) = h_\V (\tau ( \si_1(\f))) = h_\V( \tau ( \si_1(\p))) = h_\V  ( \si_2(\p)). 
\]
That is, $(\f,\p) \in\ker(h_{\V}\circ\si_2)$. So $\si_2\eleq_{\V} \si_1$.

Now suppose that $h_\V \circ \si_1 \circ \si_1 = h_\V \circ \si_1$ and $\si_2\eleq_{\V} \si_1$.
Then for each  $x\in X$, $h_\V ( \si_1 ( \si_1(x))) = h_\V ( \si_1(x))$. Hence  
$(\si_1(x), x) \in\ker(h_{\V}\circ\si_1)\subseteq \ker(h_{\V}\circ\si_2)$. That is, 
$h_\V(\si_2(\si_1(x))) = h_\V(\si_2(x))$. 
It follows that $h_\V \circ \si_2 \circ \si_1 = h_\V \circ \si_2$. So $\si_2\leqnv \si_1$. 
\end{proof}

Now let  $\Si$ be a  finite set  of $\lang$-identities and let $X\supseteq \var(\Si)$ be a finite set 
of variables.  $\ExUn_{\V}(\Si,X)$ is defined as the set of $\V$-unifiers of $\Si$ over $X$ preordered 
by $\eleq_{\V}$, and we denote $\ExUn_{\V}(\Si,\var(\Si))$ by $\ExUn_{\V}(\Si)$. 
If $\ExUn_{\V}(\Si) \neq \emptyset$, then 
 the {\it exact type of $\Si$ in $\V$} is defined as $\tp(\ExUn_{\V}(\Si))$. 
We also define the {\em exact  type of $\V$} to be
 the maximal exact type of a $\V$-unifiable finite set $\Si$ of $\lang$-identities in $\V$.
 
Note that, because $\si_2\leqnv \si_1$ implies $\si_2\eleq_{\V} \si_1$ (Lemma~\ref{l:easier}), every 
complete set for $\U{\V}{\Si}$ is also a complete set for  $\ExUn_{\V}(\Si)$.  Hence, for $\tp(\U{\V}{\Si}) \in \{1,\omega\}$,
\[
\tp(\ExUn_{\V}(\Si)) \le \tp(\U{\V}{\Si}),
\] 
and if $\tp(\ExUn_{\V}(\Si))\in \{\infty,0\}$, then also $\tp(\U{\V}{\Si}) \in \{\infty,0\}$. 

The following relationship between exact unification and admissibility in $\V$ is an immediate consequence of 
Lemma~\ref{Lemma:Crucial}.

\begin{cor}\label{Cor:AdmissibleExact}
Let $\Si\cup\De$ be a finite set of $\lang$-identities and let $X \supseteq \var(\Si \cup \De)$ be a finite set of 
variables. If $S$ is  a complete set for $\ExUn_{\V}(\Si,X)$, then the following  are equivalent:
\begin{enumerate}[label=(\roman*)]
\item	$\Si\imp \De$ is $\V$-admissible.
\item 	Each $\si \in S$ is a $\V$-unifier of some $\f \eq \p \in \De$. 
\item 	For each $\si\in S$, $\De\cap\ker(h_{\V}\circ \si)\neq\emptyset$.\qed
\end{enumerate} 
\end{cor}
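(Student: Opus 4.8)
The plan is to derive Corollary~\ref{Cor:AdmissibleExact} directly from Lemma~\ref{Lemma:Crucial}, treating the complete set $S$ for $\ExUn_{\V}(\Si,X)$ as a bridge between the quantifier over \emph{all} $\V$-unifiers appearing in that lemma and the quantifier over the finitely (or otherwise) selected members of $S$. Note first that the equivalence of (ii) and (iii) is immediate from the definition of the kernel: by Lemma~\ref{Lemma:Crucial}(i)$\Leftrightarrow$(ii) applied pointwise, a substitution $\si$ is a $\V$-unifier of some $\f \eq \p \in \De$ precisely when $\De \cap \ker(h_{\V}\circ\si) \neq \emptyset$. So the real content is the equivalence of (i) with (ii).

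For (i)$\Rightarrow$(ii), assume $\Si \imp \De$ is $\V$-admissible. Each $\si \in S$ lies in $\ExUn_{\V}(\Si,X)$ and hence is in particular a $\V$-unifier of $\Si$. By the definition of $\V$-admissibility given in the introduction, $\si$ must then be a $\V$-unifier of some member of $\De$, which is exactly (ii). This direction uses nothing about $S$ being \emph{complete}, only that its members are $\V$-unifiers of $\Si$.

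The interesting direction is (ii)$\Rightarrow$(i), where completeness of $S$ for the $\eleq_{\V}$-preorder does the work. Assume each $\si \in S$ is a $\V$-unifier of some member of $\De$, and let $\si' \colon \Fml(X)\to\FML$ be an arbitrary $\V$-unifier of $\Si$; I must show $\si'$ is a $\V$-unifier of some member of $\De$. Since $S$ is complete for $\ExUn_{\V}(\Si,X)$, there is some $\si \in S$ with $\si' \eleq_{\V} \si$, i.e.\ $\ker(h_{\V}\circ\si)\subseteq\ker(h_{\V}\circ\si')$. By hypothesis $\si$ $\V$-unifies some $\f \eq \p \in \De$, so $(\f,\p)\in\ker(h_{\V}\circ\si)$; by the kernel inclusion, $(\f,\p)\in\ker(h_{\V}\circ\si')$, so $\si'$ also $\V$-unifies $\f \eq \p$. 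As $\si'$ was an arbitrary $\V$-unifier of $\Si$, the clause $\Si \imp \De$ is $\V$-admissible by Lemma~\ref{Lemma:Crucial}(i)$\Leftrightarrow$(ii).

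The step I expect to be the crux is this last use of the $\eleq_{\V}$-preorder: it is precisely the point where the weaker exact preordering suffices, in contrast to the standard $\leqnv$-preordering, because the property ``$\V$-unifies $\f \eq \p$'' is upward closed along $\eleq_{\V}$ by the very definition of that preorder (membership of a pair in the kernel is preserved downward under kernel inclusion). No instantiation of $\si$ by a further substitution is needed. Since Lemma~\ref{Lemma:Crucial} already records the equivalence of $\V$-admissibility with the universally quantified kernel condition, essentially no further calculation is required beyond invoking it; the corollary is genuinely ``immediate'' once the role of completeness is isolated.
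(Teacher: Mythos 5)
Your proof is correct and is essentially the paper's own: the paper gives no separate argument, presenting the corollary as an immediate consequence of Lemma~\ref{Lemma:Crucial}, and your unpacking (completeness of $S$ in the $\eleq_{\V}$-preorder pushes a witnessing pair from $\ker(h_{\V}\circ\si)$ with $\si\in S$ into $\ker(h_{\V}\circ\si')$ for an arbitrary $\V$-unifier $\si'$ of $\Si$ over $X$, after which Lemma~\ref{Lemma:Crucial}(i)$\Leftrightarrow$(ii) closes the loop) is exactly the intended chain of reasoning. Your identification of the crux --- that ``$\V$-unifies $\f\eq\p$'' is preserved under $\eleq_{\V}$ purely by kernel inclusion, with no further substitution required --- is precisely the observation motivating the exact preorder.
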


\noindent Note (again) that if $\V$ has unitary or finitary exact type and there exists an algorithm for finding finite complete 
sets of unifiers, then checking admissibility in $\V$ is decidable whenever the equational theory of $\V$ is decidable. 
We also observe  that the cardinality of a finite complete set of unifiers for the premises of a clause 
 provides a bound for the number of consequences relevant for determining its admissibility.

\begin{prop}\label{Prop:TypeAndSizeAdm}
If an $\lang$-clause $\Si\imp\De$ is $\V$-admissible and $S$ is a finite complete set for 
$\ExUn_{\V}(\Si,\var(\Si\cup\De))$ then there exists 
$\De'\subseteq \De$ such that 
$|\De'|\leq |S|$ and $\Si\imp\De'$ 
is $\V$-admissible.
\end{prop}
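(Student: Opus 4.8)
The plan is to extract $\De'$ directly from the given complete set $S$ by a double application of Corollary~\ref{Cor:AdmissibleExact}. Writing $X = \var(\Si \cup \De)$, I would first invoke the implication (i)$\Rightarrow$(ii) of that corollary: since $\Si \imp \De$ is $\V$-admissible and $S$ is a complete set for $\ExUn_{\V}(\Si,X)$, every $\si \in S$ is a $\V$-unifier of some member of $\De$.

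Next, for each $\si \in S$ I would fix a choice $\f_\si \eq \p_\si \in \De$ that $\si$ $\V$-unifies, and set $\De' = \{\f_\si \eq \p_\si \mid \si \in S\}$. By construction $\De' \subseteq \De$, and since the assignment $\si \mapsto (\f_\si \eq \p_\si)$ is a surjection of the finite set $S$ onto $\De'$, we obtain $|\De'| \le |S|$.

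Finally I would apply Corollary~\ref{Cor:AdmissibleExact} in the converse direction (ii)$\Rightarrow$(i), this time to the clause $\Si \imp \De'$. Here one must check that the hypotheses of the corollary are still met with the same $X$ and the same $S$: indeed $\var(\Si \cup \De') \subseteq \var(\Si \cup \De) = X$, so $X \supseteq \var(\Si \cup \De')$, and the completeness of $S$ for $\ExUn_{\V}(\Si,X)$ depends only on $X$ and not on the consequents, so $S$ remains a complete set for $\ExUn_{\V}(\Si,X)$. Since each $\si \in S$ $\V$-unifies its witness $\f_\si \eq \p_\si \in \De'$ by construction, condition (ii) holds for $\De'$, and the corollary yields that $\Si \imp \De'$ is $\V$-admissible.

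The argument is essentially a bookkeeping exercise once Corollary~\ref{Cor:AdmissibleExact} is in hand; the only point requiring a moment's care is the observation that the completeness hypothesis on $S$ refers solely to the preordered set $\ExUn_{\V}(\Si,X)$ and is therefore insensitive to replacing $\De$ by the smaller $\De'$, so that the same $S$ can legitimately be reused on both applications of the corollary. Beyond this I anticipate no substantial obstacle.
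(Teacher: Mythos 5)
Your proof is correct and takes essentially the same route as the paper: for each $\si \in S$ choose a witness identity in $\De$ that $\si$ $\V$-unifies, collect these witnesses into $\De'$ (giving $|\De'|\le|S|$), and then use the completeness of $S$ together with Corollary~\ref{Cor:AdmissibleExact} to conclude that $\Si\imp\De'$ is $\V$-admissible. The only cosmetic difference is that the paper extracts the witnesses by citing Lemma~\ref{Lemma:Crucial} directly rather than the (i)$\Rightarrow$(ii) direction of the corollary, which amounts to the same thing.
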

\begin{proof}
Let  $S=\{\si_1,\ldots,\si_n\}$ be a complete set for $\ExUn_{\V}(\Si,\var(\Si\cup\De))$. 
By Lemma~\ref{Lemma:Crucial}, for each $i\in\{1,\ldots,n\}$, there exists 
$\f_i\eq\p_i \in \De$ such that $(\f_i,\p_i)\in\ker(h_{\V}\circ \si_i)$. Let $\De'=\{\f_1\eq\p_1, \ldots, \f_n \eq \p_n\}$. 
But $S$ is a complete set for   $\ExUn_{\V}(\Si,\var(\Si\cup\De))$, so by Corollary~\ref{Cor:AdmissibleExact}, 
also $\Si\imp\De'$ is $\V$-admissible. 
\end{proof}

\begin{prop}\label{prop:admred}
Let $\Si$ be a finite set of $\lang$-identities and  let $X\supseteq \var(\Si)$ be a finite set of variables. 
If $\tp(\ExUn_{\V}(\Si,X))=1$, then the following condition holds:

\begin{itemize}[label=$(\star)$]
\item
Whenever $\Si\imp\De$ is $\V$-admissible with $\var(\De) \subseteq X$, there exists $\f\eq\p\in\De$ such that 
$\Si\imp\f\eq\p$ is $\V$-admissible. 
\end{itemize}
Conversely,  if  $\tp(\ExUn_{\V}(\Si,X))\in\{1,\omega\}$  and $\Si$ has property $(\star)$, then 
$\tp(\ExUn_{\V}(\Si,X))=1$.
\end{prop}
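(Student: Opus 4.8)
The plan is to derive both implications from Corollary~\ref{Cor:AdmissibleExact}, which tells us that for any complete set $S$ of $\ExUn_{\V}(\Si,X)$ (with $X\supseteq\var(\Si\cup\De)$), the clause $\Si\imp\De$ is $\V$-admissible iff every $\si\in S$ is a $\V$-unifier of some identity in $\De$. Everything reduces to clever choices of $\De$ and to reapplying this single corollary.

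For the forward implication I would assume $\tp(\ExUn_{\V}(\Si,X))=1$, so that there is a singleton complete set $\{\si^\ast\}$ for $\ExUn_{\V}(\Si,X)$. Given a $\V$-admissible clause $\Si\imp\De$ with $\var(\De)\subseteq X$, the corollary applied to $\{\si^\ast\}$ produces some $\f\eq\p\in\De$ that $\si^\ast$ $\V$-unifies. Applying the corollary a second time, now with the consequent set $\{\f\eq\p\}$ (noting $\var(\Si\cup\{\f\eq\p\})\subseteq X$) and the same complete set $\{\si^\ast\}$, the fact that its single element $\V$-unifies $\f\eq\p$ yields exactly that $\Si\imp\f\eq\p$ is $\V$-admissible. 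Hence $(\star)$ holds; this direction is essentially immediate.

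For the converse I would argue by contraposition. Assuming $\tp(\ExUn_{\V}(\Si,X))\in\{1,\omega\}$ but not $1$, the type is $\omega$, so there is a finite $\mu$-set $M=\{\si_1,\ldots,\si_n\}$ for $\ExUn_{\V}(\Si,X)$ with $n\geq 2$; recall that $M$ is in particular complete and that its elements are pairwise $\eleq_{\V}$-incomparable. For each $i$ I would fix some $j\neq i$ and use $\si_j\not\eleq_{\V}\si_i$, i.e.\ $\ker(h_{\V}\circ\si_i)\not\subseteq\ker(h_{\V}\circ\si_j)$, to choose an identity $\f_i\eq\p_i$ with $(\f_i,\p_i)\in\ker(h_{\V}\circ\si_i)\setminus\ker(h_{\V}\circ\si_j)$; since $\ker(h_{\V}\circ\si_i)$ is a congruence on $\Fml(X)$ this automatically gives $\var(\f_i\eq\p_i)\subseteq X$. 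Setting $\De=\{\f_1\eq\p_1,\ldots,\f_n\eq\p_n\}$, each $\si_i$ $\V$-unifies its witness $\f_i\eq\p_i\in\De$, so by the corollary $\Si\imp\De$ is $\V$-admissible; yet for each $k$ the unifier $\si_j$ used in defining $\f_k\eq\p_k$ fails to $\V$-unify it, so the corollary shows $\Si\imp\f_k\eq\p_k$ is not $\V$-admissible. This single clause refutes $(\star)$, giving the required contradiction.

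The step I expect to need the most care is the construction of $\De$. The tempting first attempt is to separate each $\si_i$ from all other unifiers at once, which would force one to amalgamate several distinguishing identities into a single identity, and such a combination need not be available in an arbitrary language $\lang$. The observation that resolves this is that $(\star)$ only has to fail collectively: it is enough that each witness $\f_i\eq\p_i$ be $\V$-unified by its own $\si_i$ while being rejected by at least one other member of $M$, so a single pairwise incomparability per index suffices and no witness need isolate $\si_i$ from every competitor.
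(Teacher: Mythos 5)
Your proof is correct and follows essentially the same route as the paper: both directions reduce to Corollary~\ref{Cor:AdmissibleExact}, and the converse takes a finite $\mu$-set, picks kernel-separating witnesses to build a consequent set $\De$ that is admissible yet admits no single admissible member, contradicting $(\star)$. The only differences are cosmetic: the paper derives the forward direction by citing Proposition~\ref{Prop:TypeAndSizeAdm} rather than reapplying the corollary directly, and in the converse it takes a witness $\f_{ij}\eq\p_{ij}$ for \emph{every} ordered pair $i\neq j$ (so $|\De|\leq n(n-1)$), whereas your single witness per index, rejected by one chosen competitor, suffices just as well.
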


\begin{proof}
The first claim follows immediately from the previous proposition, noting that if $\tp(\ExUn_{\V}(\Si,X))=1$ and 
$\Si\imp\De$ is $\V$-admissible, then $\De \not = \emptyset$. 
For the second claim, 
assume that $\tp(\ExUn_{\V}(\Si,X))\in\{1,\omega\}$ and that  $\Si$ has property $(\star)$. 
Then there exists a $\mu$-set $\{\si_1,\ldots,\si_n\}$ for $\ExUn_{\V}(\Si,X)$. 
For each $i,j\in\{1,\ldots, n\}$ such that $i\neq j$, consider $(\f_{ij},\p_{ij})\in\ker(h_{\V}\circ\si_i)\setminus\ker(h_{\V}\circ\si_j)$. Let \[\De=\{\f_{ij}\eq\p_{ij}\mid i,j\in\{1,\ldots, n\} \mbox{ and }i\neq j\}.\]  
Suppose for a contradiction that $n\neq 1$ and hence $\De\neq\emptyset$.
As $\{\si_1,\ldots,\si_n\} $ is a $\mu$-set for $\ExUn_{\V}(\Si,X)$, it follows by Corollary~\ref{Cor:AdmissibleExact} that 
$\Si\imp\De$ is $\V$-admissible. But then, by assumption, there exists $\f_{ij}\eq\p_{ij}\in\De$ such that  
$\Si\imp\f_{ij}\eq\p_{ij}$ is $\V$-admissible, contradicting the fact that $\V\not\models\si_j(\f_{ij})\eq\si_j(\p_{ij})$. 
So $n=1$. Hence $\tp(\ExUn_{\V}(\Si,X))=1$.
\end{proof}

We turn our attention now to the algebraic interpretation of exact unification. Note that while the syntactic accounts of 
exact unification and standard unification use the same sets of unifiers but consider different preorders, 
the algebraic interpretations of exact unification and standard unification share the same preorder but differ in the sets of (algebraic) unifiers considered (see also the comments after Theorem~\ref{Theo:EqualTypes}).

Following~\cite{DeJongh82}, a finite set  of $\lang$-identities $\Si$ will be called {\em exact in $\V$} if there exists 
a substitution $\si\colon\Fml(\var(\Si))\rightarrow\FML$ such that for all $\f,\p\in\Fml(\var(\Si))$, 
\[
\V\models \Si\imp \f\eq\p 
\quad \Longleftrightarrow \quad 
\V\models \si(\f)\eq \si(\p).
\]
Note that, by definition, every finite set of $\lang$-identities that is exact in $\V$ is $\V$-unifiable.

Let $\Si$ be a  finite set of $\lang$-identities and let $X \supseteq \var(\Si)$ be a finite set of variables. 
We define $\rho_{(\Si,X,\V)}\colon\FreeV(X)\rightarrow \FpV(\Si,X)$ as the canonical quotient homomorphism 
from the free algebra $\FreeV(X)$ to the finitely presented algebra $\FpV(\Si,X)$.

\begin{lem}\label{Lem:ExactAsSubAlg}
A finite set $\Si$ of $\lang$-identities is exact in $\V$ iff 
$\FpV\bigl(\Si,\var(\Si)\bigr)\in\cop{IS}\bigl(\FreeV(\omega)\bigr)$. 
\end{lem}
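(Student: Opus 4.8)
The plan is to reduce both conditions to a comparison of kernels of the canonical surjection $\rho \circ h_\V \colon \Fml(X) \to \FpV(\Si,X)$, where $X = \var(\Si)$ and $\rho = \rho_{(\Si,X,\V)}$, and then to read off the required embedding from the homomorphism theorem. The tool I will use throughout is the following bridge between clause validity over the \emph{whole} class $\V$ and the presentation congruence, which I refer to as $(\ast)$: for all $\f,\p \in \Fml(X)$,
\[
\V \models \Si \imp \f \eq \p \quad \Longleftrightarrow \quad (\rho \circ h_\V)(\f) = (\rho \circ h_\V)(\p).
\]
I will establish $(\ast)$ from the universal property of the finitely presented algebra. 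For the left-to-right implication I evaluate the clause in $\FpV(\Si,X) \in \V$ under its canonical generators $\rho(h_\V(x))$, which satisfy $\Si$ by construction of $\Theta_\Si$; for the right-to-left implication I take an arbitrary $\A \in \V$ with an assignment of the variables satisfying $\Si$, extend it to $\bar v \colon \FreeV(X) \to \A$, and note that $\Theta_\Si \subseteq \ker \bar v$ by minimality of $\Theta_\Si$, so the pair $(h_\V(\f),h_\V(\p))$ lies in $\ker \bar v$. Combining $(\ast)$ with the standard fact $\V \models \sigma(\f)\eq\sigma(\p) \Leftrightarrow h_\V(\sigma(\f)) = h_\V(\sigma(\p))$, the exactness condition for a substitution $\sigma \colon \Fml(X) \to \FML$ becomes exactly the equality of kernels $\ker(\rho \circ h_\V) = \ker(h_\V \circ \sigma)$.

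With this in hand, both directions are short. If $\Si$ is exact, witnessed by $\sigma$, then $\ker(\rho \circ h_\V) = \ker(h_\V \circ \sigma)$, so by the homomorphism theorem
\[
\FpV(\Si,X) \cong \Fml(X)/\ker(\rho \circ h_\V) = \Fml(X)/\ker(h_\V \circ \sigma) \cong \mathrm{im}(h_\V \circ \sigma) \leq \FreeV(\omega),
\]
giving $\FpV(\Si,X) \in \cop{IS}(\FreeV(\omega))$. Conversely, given an embedding $e \colon \FpV(\Si,X) \hookrightarrow \FreeV(\omega)$, I use surjectivity of $h_\V \colon \FML \to \FreeV(\omega)$ to choose, for each $x \in X$, a formula $\sigma(x) \in \FML$ with $h_\V(\sigma(x)) = e(\rho(h_\V(x)))$; this defines a substitution $\sigma \colon \Fml(X) \to \FML$. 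Since $h_\V \circ \sigma$ and $e \circ \rho \circ h_\V$ are homomorphisms out of $\Fml(X)$ agreeing on the free generators $X$, they coincide, and injectivity of $e$ then yields $\ker(h_\V \circ \sigma) = \ker(e \circ \rho \circ h_\V) = \ker(\rho \circ h_\V)$, which by the reformulation above is precisely exactness of $\Si$ witnessed by $\sigma$.

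The main obstacle is $(\ast)$ itself, and in particular resisting the temptation to conflate it with Lemma~\ref{Lemma:Crucial}. Exactness is phrased in terms of validity of $\Si \imp \f \eq \p$ in \emph{every} algebra of $\V$, which is what corresponds to the presentation congruence $\Theta_\Si$; this is strictly stronger than validity in $\FreeV(\omega)$ alone (the latter being admissibility). The two genuinely differ, and bridging to $\Theta_\Si$ forces one to use that $\FpV(\Si,X)$ lies in $\V$ and carries canonical generators satisfying $\Si$. Once $(\ast)$ is secured, the remainder is routine diagram chasing with the homomorphism theorem together with the injectivity of $e$ and the surjectivity of $h_\V$.
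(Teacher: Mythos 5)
Your proof is correct and takes essentially the same route as the paper's: both directions rest on the same substitution--homomorphism correspondence (choosing formula preimages under $h_\V$ of the images of the generators to build $\sigma$ from the embedding, and factoring $h_\V\circ\sigma$ through the presentation map for the converse), with the evaluation of $\Si\imp\f\eq\p$ at the canonical generators of $\FpV(\Si,\var(\Si))$ doing the key work. The only difference is organizational: you isolate and prove the bridge fact $(\ast)$ relating validity of $\V\models\Si\imp\f\eq\p$ to the presentation congruence (the paper invokes this standard fact implicitly in the final step of its chain of equivalences) and you compress the paper's element-by-element injectivity argument into an application of the homomorphism theorem to the kernel equality $\ker(\rho\circ h_\V)=\ker(h_\V\circ\sigma)$.
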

\begin{proof}
($\Rightarrow$) Let $X=\var(\Si)$ and let $\si\colon\Fml(X)\rightarrow\FML$  be a substitution 
such that for all $\f,\p\in\Fml(X)$, $\V\models \Si\imp \f\eq\p$
iff $\V\models \si(\f)\eq \si(\p)$. That is, $\V\models \Si\imp \f\eq\p$ iff $h_{\V}(\si(\f))=h_{\V}(\si(\p))$. 
Let $g\colon \FreeV(X)\rightarrow \FreeV(\omega)$ be the unique homomorphism
satisfying $h_{\V}\circ\si=g\circ h_{\V}$. Then $h_{\V}(\Si)\subseteq \ker(g)$ and we obtain 
a unique homomorphism  $s\colon \FpV(\Si,X)\rightarrow\FreeV(\omega)$  such that 
$s\circ\rho_{(\Si,X,\V)}=g$. Consider $a,b\in \FpV(\Si,X)$ such that $s(a)=s(b)$ and 
$\f,\p\in\Fml(X)$ satisfying  $\rho_{(\Si,X,\V)}(h_{\V}(\f))=a$ and $\rho_{(\Si,X,\V)}(h_{\V}(\p))=b$. 
It follows that
\[
h_{\V}(\si(\f))=g( h_{\V}(\f))=s(\rho_{(\Si,X,\V)}( h_{\V}(\f)))=s(a)=s(b) 
= h_{\V}(\si(\p)).
\]
So, by assumption, $\V \models \Si \imp \f\eq\p$. But then, since  
$\rho_{(\Si,X,\V)}(h_{\V}(\f'))=\rho_{(\Si,X,\V)}(h_{\V}(\p'))$ for 
all $\f' \eq \p' \in \Si$, it follows that
\[
a=\rho_{(\Si,X,\V)}(h_{\V}(\f))=\rho_{(\Si,X,\V)}(h_{\V}(\p))=b.
\]
Hence $s$ is a one-to-one homomorphism and $\FpV(\Si,X)\in\cop{IS}(\FreeV(\omega))$.

($\Leftarrow$) Let $X=\var(\Si)$ and let 
$s\colon \FpV(\Si,X)\rightarrow\FreeV(\omega)$ be a  one-to-one homomorphism. 
Consider a homomorphism $\si\colon\Fml(X)\rightarrow\FML$ satisfying
\[
\si(x)=\f_x\ \mbox{ for each } x\in X,
\]
where $\f_x$ is any formula such that $s(\rho_{(\Si,X,\V)}(x))=h_{\V}(\f_x)$. 
By induction on formula complexity, $s(\rho_{(\Si,X,\V)}(h_{\V}(\f))) = h_{\V}(\si(\f))$ for all 
$\f\in\Fml(X)$. But then for $\f,\p\in\Fml(X)$, using the fact that $s$ is one-to-one:
\[
\begin{array}{rcl}
\V\models \si(\f)\eq \si(\p) 	& \Longleftrightarrow & h_{\V}(\si(\f))=h_{\V}(\si(\p))\\[.05in]
								& \Longleftrightarrow & s(\rho_{(\Si,X,\V)}(h_{\V}(\f)))=s(\rho_{(\Si,X,\V)}(h_{\V}(\p)))\\[.05in]
								& \Longleftrightarrow & \rho_{(\Si,X,\V)}(h_{\V}(\f))=\rho_{(\Si,X,\V)}(h_{\V}(\p))\\[.05in]
								& \Longleftrightarrow & \V\models \Si\imp \f\eq\p. 
\end{array}
\]
That is, $\Si$  is exact in $\V$.
\end{proof}

An algebra~$\A \in \V$ is called {\em exact} in $\V$ if it is isomorphic to a finitely generated subalgebra of 
$\FreeV(\omega)$. By Lemma~\ref{Lem:ExactAsSubAlg} (see also~\cite{DeJongh82}), a finite  set of $\lang$-identities 
$\Si$ is exact in $\V$ iff the finitely presented algebra $\FpV(\Si,\var(\Si))$ is exact in $\V$.

Given $\A \in \mathsf{FP}(\V)$, an onto homomorphism 
$u \colon \A \to \E$ is called a {\em coexact unifier} for~$\A$ in $\V$ if  $\E$ is exact in $\V$. 
 Coexact unifiers are preordered in the same way as algebraic unifiers; that is, if $u_i \colon \A \to \E_i$ for $i=1,2$ are coexact unifiers for 
 $\A$ in $\V$, then $u_1 \le u_2$ iff there exists a homomorphism $f \colon \E_1 \to \E_2$ such that $f \circ u_1 = u_2$. 

 Let $\EUAlg_{\V}(\A)$ be the  set of coexact unifiers for $\AÊ\in \mathsf{FP}(\V)$ preordered by $\leq$.  
 If $\EUAlg_{\V}(\A)\neq\emptyset$,  then the 
 {\em exact type} of $\A$ is defined as the type of $\EUAlg_{\V}(\A)$. The {\em algebraic exact type} 
 of $\V$ is the maximal exact type of $\A$ in $\V$ such that $\EUAlg_{\V}(\A) \neq\emptyset$.

\begin{thm}\label{Theo:EqualTypes}
Let $\V$ be an equational class and let $\Si$ be a finite $\V$-unifiable set of $\lang$-identities. Then for any 
finite set of variables $X\supseteq \var(\Si)$:
\[
\tp\bigl(\ExUn_{\V}(\Si,X)\bigr)=\tp\bigl(\EUAlg_{\V}(\FpV(\Si,X))\bigr).
\]
Hence the exact type and the algebraic exact type of $\V$ coincide. 
\end{thm}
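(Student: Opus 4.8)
The plan is to follow the same strategy as Ghilardi's proof of the corresponding result for standard unification~\cite{Ghi97}: rather than comparing $\mu$-sets directly, I would exhibit an equivalence of preordered sets between $\ExUn_{\V}(\Si,X)$ and $\EUAlg_{\V}(\FpV(\Si,X))$ in the sense of Lemma~\ref{Lemma:EquivPreorder}, so that equality of types follows at once. The coincidence of the global exact and algebraic exact types of $\V$ then drops out by taking maxima, using the identity $\ExUn_{\V}(\Si,X) = \ExUn_{\V}(\Si \cup \{x \eq x \mid x \in X\})$ together with the fact that, by Lemma~\ref{Lem:ExactAsSubAlg}, every $\A \in \mathsf{FP}(\V)$ admitting a coexact unifier is isomorphic to some $\FpV(\Si,X)$ with $\Si$ a $\V$-unifiable finite set of $\lang$-identities and $X \supseteq \var(\Si)$ finite.

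I would define $e \colon \ExUn_{\V}(\Si,X) \to \EUAlg_{\V}(\FpV(\Si,X))$ by sending a $\V$-unifier $\si \colon \Fml(X) \to \FML$ of $\Si$ to a coexact unifier built exactly as in the forward direction of Lemma~\ref{Lem:ExactAsSubAlg}. Let $g_\si \colon \FreeV(X) \to \FreeV(\omega)$ be the unique homomorphism with $h_{\V} \circ \si = g_\si \circ h_{\V}$; since $\si$ unifies $\Si$ we have $h_{\V}(\Si) \subseteq \ker(g_\si)$, so $g_\si = s_\si \circ \rho_{(\Si,X,\V)}$ for a unique $s_\si \colon \FpV(\Si,X) \to \FreeV(\omega)$. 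The image $\E_\si := s_\si(\FpV(\Si,X))$ is a finitely generated subalgebra of $\FreeV(\omega)$, being a homomorphic image of the finitely generated algebra $\FpV(\Si,X)$, and hence is exact in $\V$; so the corestriction $u_\si \colon \FpV(\Si,X) \twoheadrightarrow \E_\si$ of $s_\si$ is a coexact unifier, and it satisfies $\ker(u_\si) = \ker(s_\si)$. I set $e(\si) = u_\si$.

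For condition~(2) of Lemma~\ref{Lemma:EquivPreorder}, the key observation is that, writing $\pi = \rho_{(\Si,X,\V)} \circ h_{\V} \colon \Fml(X) \to \FpV(\Si,X)$, one has $h_{\V} \circ \si = s_\si \circ \pi$ with $\pi$ surjective; hence $\si_1 \eleq_{\V} \si_2$, which by definition reads $\ker(h_{\V} \circ \si_2) \subseteq \ker(h_{\V} \circ \si_1)$, is equivalent to $\ker(s_{\si_2}) \subseteq \ker(s_{\si_1})$, and, since $u_{\si_2}$ is onto its image, to $u_{\si_1}$ factoring through $u_{\si_2}$, i.e. to $e(\si_1) \le e(\si_2)$. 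Condition~(1) is where the real work lies, and I expect it to be the main obstacle: I must show that every coexact unifier is preorder-equivalent to one in the range of $e$. Given a coexact unifier $u \colon \FpV(\Si,X) \to \E$ with $\E$ exact, I would fix a one-to-one homomorphism $j \colon \E \to \FreeV(\omega)$; using that $h_{\V} \colon \FML \to \FreeV(\omega)$ is onto, choose for each $x \in X$ a formula $\f_x$ with $h_{\V}(\f_x) = (j \circ u \circ \pi)(x)$, and let $\si \colon \Fml(X) \to \FML$ be the substitution with $\si(x) = \f_x$. Then $h_{\V} \circ \si = j \circ u \circ \pi$; since $\pi$ identifies $\f$ and $\p$ for every $\f \eq \p \in \Si$, the map $\si$ is a $\V$-unifier of $\Si$, and surjectivity of $\pi$ forces $s_\si = j \circ u$. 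Consequently $e(\si) = u_\si$ and $u$ differ only by the isomorphism $\E \cong \E_\si = j(\E)$ induced by $j$, so each factors through the other and they are equivalent in the coexact preorder. Beyond this lifting step, the only remaining care is in bookkeeping the directions of the two preorders so that the kernel-inclusion correspondence is matched in the sense demanded by $\eleq_{\V}$; given $\pi$, these are routine applications of the homomorphism theorem.
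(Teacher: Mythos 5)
Your proposal is correct and is essentially the paper's own proof: the paper likewise sends $\si$ to the onto homomorphism $u_\si \colon \FpV(\Si,X) \to h_\V(\si(\Fml(X)))$ satisfying $u_\si \circ \rho_{(\Si,X,\V)} \circ h_\V = h_\V \circ \si$, verifies condition (2) of Lemma~\ref{Lemma:EquivPreorder} via the same kernel-inclusion argument using surjectivity, and establishes condition (1) by exactly your lifting step (embed $\E$ into $\FreeV(\omega)$, choose formulas representing the images of the generators, and note that $u$ and $u_\si$ differ by the induced isomorphism). The only cosmetic difference is that for the final ``hence'' you cite Lemma~\ref{Lem:ExactAsSubAlg}, whereas the needed fact (that a finitely presented algebra with a coexact unifier has a $\V$-unifiable presentation) really comes from your own condition-(1) construction; this does not affect correctness.
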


\begin{proof}
Consider
 $\si\colon\Fml(X)\to \FML$ 
in $\ExUn_{\V}(\Si,X)$. 
Let $\hat{\si}\colon\FreeV(X)\rightarrow h_{\V}(\si(\Fml(X)))$ be the unique homomorphism satisfying
\[
\hat{\si}(x)=
h_{\V}(\si(x)) \ \mbox{ for each }x\in X.
\]
Then $\Si\subseteq\ker(\hat{\si}\circ h_{\V})$, and there exists a homomorphism 
$u_{\si}\colon \FpV(\Si,X) \to h_{\V}(\si(\Fml(X)))$
such that 
\begin{equation}\label{Eq:SyntacticAlgebraicSquare}
u_{\si}\circ \rho_{(\Si,X,\V)}\circ h_{\V} = h_{\V}\circ\si.
\end{equation} 
Note that the map $u_{\si}$ is onto $h_{\V}(\si(\Fml(X)))$. Because $h_{\V}(\si(\Fml(X)))$ is a finitely generated 
subalgebra of $\FreeV(\omega)$, 
also $u_{\si}\in \EUAlg_{\V}(\FpV(\Si,X))$. 
It suffices now, by Lemma~\ref{Lemma:EquivPreorder}, to show that 
the assignment $\si \mapsto u_{\si}$ determines an equivalence between the preordered sets $\ExUn_{\V}(\Si,X)$ and 
$\EUAlg_{\V}(\FpV(\Si,X))$. 

(1) Let $u \colon\FpV(\Si,X)\to \E$ be a coexact unifier for $\FpV(\Si,X)$ in $\V$. 
Because $\E$ is exact in $\V$, there exists a one-to-one homomorphism 
$\iota\colon \E\to \FreeV(\omega)$. 
For each $x\in X$, consider $\f_x\in\Fml(\omega)$ such that 
$h_{\V}(\f_x)=\iota(u(\rho_{(\Si,X,\V)}(x)))$.
Let $\si\colon \Fml(X)\to \Fml(\omega)$ be the substitution defined
by $\si(x)=\f_x$ for each $x\in X$. 
It is straightforward to check that 
$\iota\circ u=u_{\si}$ and $\iota(\E)= u_{\si}(\FpV(\Si,X))$. Because $\iota$ is one-to-one, there exists a homomorphism 
$\eta\colon u_{\si}(\FpV(\Si,X))\to \E$  that is the inverse of $\iota$. Therefore $u$ and $u_{\si}$ 
are equivalent in the preorder $\EUAlg_{\V}(\FpV(\Si,X))$, i.e., $u \le u_{\si}$ and $u_{\si} \le u$.

(2) Using \eqref{Eq:SyntacticAlgebraicSquare}, for all $\si_1,\si_2\in \ExUn_{\V}(\Si,X)$:
\begin{eqnarray*}
\si_2\eleq_{\V} \si_1&\Longleftrightarrow& \ker(h_{\V}\circ\si_1)\subseteq\ker(h_{\V}\circ\si_2)\\
&\Longleftrightarrow&
\ker(u_{\si_1}\circ \rho_{(\Si,X,\V)})\subseteq\ker(u_{\si_2}\circ \rho_{(\Si,X,\V)})\\
&\Longleftrightarrow&\ker(u_{\si_1})\subseteq \ker(u_{\si_2}).
\end{eqnarray*}
Let us denote the codomains of $u_{\si_1}$ and $u_{\si_2}$ by $\E_1$ and $\E_2$, respectively.
Because $u_{\si_1}$ is onto~$\E_1$, also $\ker(u_{\si_1})\subseteq \ker(u_{\si_2})$ iff there exists 
$h\colon \E_1\to \E_2$ such that $h\circ u_{\si_1}=u_{\si_2}$, that is, $u_{\si_2}\leq u_{\si_1}$.
\end{proof}

In passing from Ghilardi's algebraic account of unification to algebraic 
coexact unifiers, we have modified the definition of unifiers but preserved 
the preorder. An alternative approach, perhaps closer to the syntactic approach to exact unification, would be to preserve the unifiers as maps from 
a finitely presented algebra into a projective algebra, modifying the preorder. 
However, the characterization provided here highlights the connection 
between coexact unifiers and certain congruences of the relevant finitely presented algebra.

Given an algebra  $\A$ in  $\V$, recall that $\cg(\A)$ denotes the set of congruences on $\A$. 
We  let $\cg_e(\A)$ denote the set of congruences $\theta \in \cg(\A)$ of $\A$ such that the quotient $\A/\theta$ is exact in $\V$; i.e., 
\[
\cg_e(\A)=\{\theta\in\cg(\A)\mid \A/\theta\in\cop{IS}(\FreeV(\omega))\}.
\]

\begin{thm} \label{t:kerequiv}
Let $\V$ be an equational class and $\A \in \mathsf{FP}(\V)$.

\begin{enumerate}[label=\({\alph*}]

\item	For any onto homomorphism $u\colon \A\to \B$:			
			\[
			u\in \EUAlg_{\V}(\A) \quad \Longleftrightarrow \quad \ker(u)\in \cg_e(\A).
			\]
\item	For all $u,v\in \EUAlg_{\V}(\A)$:
			\[
			u\leq v\quad \Longleftrightarrow \quad\ker(v)\subseteq \ker(u).
			\]
\end{enumerate}
\noindent
Hence $\ker\colon \EUAlg_{\V}(\A)\to \cg_e(\A)$ determines an equivalence  (i.e., $\ker$ satisfies  \emph{(1)} and \emph{(2)} of 
Lemma~\emph{\ref{Lemma:EquivPreorder})} between the preordered sets $(\EUAlg_{\V}(\A),\le)$ 
and $(\cg_e(\A),\supseteq)$, and
\[
\tp\bigl(\EUAlg_{\V}(\A)\bigr) = \tp\bigl(\cg_e(\A)\bigr).
\]
\end{thm}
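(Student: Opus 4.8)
The plan is to reduce the whole statement to the first isomorphism theorem together with the universal factoring property of surjective homomorphisms, since the coexact preorder is defined by composition and the quotient $\A/\ker(u)$ records precisely the data that matters. Throughout I will use that $\V$, being a variety, is closed under homomorphic images and subalgebras (so every image of $\A\in\mathsf{FP}(\V)$ again lies in $\V$), and that being \emph{exact in $\V$} is an isomorphism-invariant property, as it is defined via membership in $\cop{IS}(\FreeV(\omega))$.

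For part~(a), I would fix an onto homomorphism $u\colon\A\to\B$ and apply the first isomorphism theorem to obtain $\A/\ker(u)\cong\B$. Since exactness is preserved by isomorphism, $\B$ is exact in $\V$ iff $\A/\ker(u)\in\cop{IS}(\FreeV(\omega))$. Now $u\in\EUAlg_{\V}(\A)$ holds exactly when $u$ is onto and $\B$ is exact, while $\ker(u)\in\cg_e(\A)$ holds exactly when $\A/\ker(u)$ is exact; as $u$ is already assumed onto, the two conditions coincide. No separate hypothesis that $\B$ lie in $\mathsf{FP}(\V)$ is needed, since exactness alone is what the definition of coexact unifier requires.

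For part~(b), take $u,v\in\EUAlg_{\V}(\A)$ with codomains $\E_u$ and $\E_v$, and read $u\le v$ as: there is a homomorphism $f\colon\E_v\to\E_u$ with $f\circ v=u$. The forward direction is immediate, since $v(a)=v(b)$ forces $u(a)=f(v(a))=f(v(b))=u(b)$, so $\ker(v)\subseteq\ker(u)$. For the converse I would invoke surjectivity of $v$ (which holds because $v$ is a coexact unifier): if $\ker(v)\subseteq\ker(u)$, then $u$ factors through $v$, producing $f\colon\E_v\to\E_u$ with $f\circ v=u$, i.e.\ $u\le v$. The one point to flag is precisely this use of surjectivity of $v$, which is guaranteed by the definition of coexact unifier.

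It remains to assemble the equivalence in the sense of Lemma~\ref{Lemma:EquivPreorder}, with $e=\ker$ and the preorder on $\cg_e(\A)$ taken to be $\supseteq$. Part~(b) is verbatim condition~(2) of that lemma. For condition~(1), given $\theta\in\cg_e(\A)$ the canonical quotient map $\A\to\A/\theta$ is onto with $\A/\theta$ exact, hence lies in $\EUAlg_{\V}(\A)$ by part~(a) and has kernel exactly $\theta$; thus $\ker$ is surjective onto $\cg_e(\A)$ and condition~(1) holds by reflexivity of $\supseteq$. Lemma~\ref{Lemma:EquivPreorder} then yields $\tp(\EUAlg_{\V}(\A))=\tp(\cg_e(\A))$. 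I do not expect any genuine obstacle here; the only care required is bookkeeping the directions of the two preorders ($\le$ against $\supseteq$) and remembering that surjectivity of $v$ is what drives the converse in~(b).
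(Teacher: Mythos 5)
Your proposal is correct and takes essentially the same approach as the paper: part (a) via the homomorphism (first isomorphism) theorem, and part (b) by observing that, since $v$ is onto, $u$ factors through $v$ exactly when $\ker(v)\subseteq\ker(u)$. The only difference is that you explicitly verify condition (1) of Lemma~\ref{Lemma:EquivPreorder} using the canonical quotient maps $\A\to\A/\theta$, a step the paper leaves implicit.
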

\begin{proof}
For (a), observe that, by the homomorphism theorem:
\[
u \in \EUAlg_{\V}(\A)
\quad \Longleftrightarrow \quad
u(\A)\in\cop{IS}(\FreeV(\omega))
\quad \Longleftrightarrow \quad
\ker(u)\in \cg_e(\A).
\]
For (b), observe that $u\leq v$ iff there exists a homomorphism $f\colon v(\A)\to u(\A)$ such that $f\circ v=u$ iff 
(as $v$ is onto) $\ker(v)\subseteq \ker(u)$.
\end{proof}

\begin{cor}\label{Cor:LocFin}
Let $\V$ be a locally finite equational class. Then $\tp(\EUAlg_{\V}(\A)) \in \{1,\omega\}$
for  each $\A \in \mathsf{FP}(\V)$.  
Hence $\V$ has unitary or finitary exact type.
\end{cor}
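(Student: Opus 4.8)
The plan is to exploit the equivalence established in Theorem~\ref{t:kerequiv}, which reduces computing $\tp(\EUAlg_{\V}(\A))$ to computing $\tp(\cg_e(\A))$, the type of the set of congruences $\theta$ on $\A$ with $\A/\theta$ exact, ordered by reverse inclusion $\supseteq$. Since $\V$ is locally finite and $\A \in \mathsf{FP}(\V)$ is finitely generated, $\A$ is a finite algebra. A finite algebra has only finitely many congruences, so $\cg_e(\A)$ is a finite set. The strategy is then simply to argue that a nonempty finite preordered set can only have unitary or finitary type, never infinitary or nullary type, and to conclude that the same holds for $\EUAlg_{\V}(\A)$ via the equivalence.

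First I would invoke local finiteness: a finitely generated algebra in a locally finite variety is finite, so $\A$ is finite and $\cg(\A)$ is finite, whence the subset $\cg_e(\A) \subseteq \cg(\A)$ is finite. Next I would show that any nonempty finite preordered set $\pos{P}$ has a $\mu$-set, ruling out type $0$; indeed, taking one representative from each equivalence class of the induced equivalence relation $(x \le y$ and $y \le x)$ and then selecting the maximal classes yields a finite complete antichain. Concretely, for the reverse-inclusion order on $\cg_e(\A)$, the maximal elements under $\supseteq$ are the $\subseteq$-minimal congruences in $\cg_e(\A)$, and since the set is finite every element lies below such a minimal one, so these minimal congruences form a finite complete $\mu$-set. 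A finite $\mu$-set automatically excludes type $\infty$ (no infinite antichain) and type $0$ (a $\mu$-set exists), leaving only $1$ or $\omega$.

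Finally I would transfer this conclusion back. By Theorem~\ref{t:kerequiv}, $\ker$ determines an equivalence between $(\EUAlg_{\V}(\A), \le)$ and $(\cg_e(\A), \supseteq)$, and by Lemma~\ref{Lemma:EquivPreorder} equivalent preordered sets have the same type. Hence $\tp(\EUAlg_{\V}(\A)) = \tp(\cg_e(\A)) \in \{1,\omega\}$ for each $\A \in \mathsf{FP}(\V)$ with $\EUAlg_{\V}(\A) \neq \emptyset$. The statement about $\V$ having unitary or finitary \emph{exact} type then follows from Theorem~\ref{Theo:EqualTypes}, which equates the syntactic and algebraic exact types: since the exact type of $\V$ is the maximal type over $\V$-unifiable finite sets $\Si$, and each such type equals $\tp(\EUAlg_{\V}(\FpV(\Si,X))) \in \{1,\omega\}$, the maximum lies in $\{1,\omega\}$ as well.

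I do not anticipate a serious obstacle here, as the result is essentially a finiteness observation: the real content is already carried by Theorem~\ref{t:kerequiv}. The only point requiring minor care is the elementary lemma that every nonempty finite preordered set has finite type, which hinges on passing to the quotient poset and locating its maximal elements; once this is noted the conclusion is immediate.
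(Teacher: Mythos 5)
Your proposal is correct and follows essentially the same route as the paper: reduce to $\cg_e(\A)$ via Theorem~\ref{t:kerequiv}, observe that local finiteness makes $\A$ (hence $\cg_e(\A)$) finite, and conclude the type lies in $\{1,\omega\}$. The paper leaves implicit the two details you spell out—that a nonempty finite preordered set always has a finite $\mu$-set, and that the final claim about the exact type of $\V$ rests on Theorem~\ref{Theo:EqualTypes}—so your write-up is just a more explicit version of the same argument.
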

\begin{proof}

As $\V$ is locally finite, each finitely generated algebra in $\V$ is finite. 
In particular, any given $\A \in \mathsf{FP}(\V)$ is finite. 
But then also $\cg_e(\A)$ is finite. Hence, using Theorem~\ref{t:kerequiv}, 
$\tp(\EUAlg_{\V}(\A)) = \tp(\cg_e(\A)) \in  \{1,\omega\}$.
\end{proof}

\begin{cor}\label{Cor:TypeUnitary}
Let $\V$ be an equational class and consider $\A \in \mathsf{FP}(\V)$ such that 
 $\cg(\A)$ is totally ordered by inclusion. 
If $\EUAlg_{\V}(\A) \neq \emptyset$, then it is totally ordered and $\tp(\EUAlg_{\V}(\A))\in\{1,0\}$. 
In particular, if $\A$  is simple, then either  $\EUAlg_{\V}(\A) = \emptyset$ or $\tp(\EUAlg_{\V}(\A))=1$.
\end{cor}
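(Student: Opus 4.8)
The plan is to transport the whole problem to the poset of ``exact'' congruences of $\A$ via Theorem~\ref{t:kerequiv}, and then use the elementary fact that a chain can only be unitary or nullary.

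First I would invoke Theorem~\ref{t:kerequiv}: the map $\ker$ determines an equivalence between the preordered sets $(\EUAlg_{\V}(\A),\le)$ and $(\cg_e(\A),\supseteq)$, so by Lemma~\ref{Lemma:EquivPreorder} they have the same type; moreover $\EUAlg_{\V}(\A)\neq\emptyset$ iff $\cg_e(\A)\neq\emptyset$, since each $\theta\in\cg_e(\A)$ is realized as $\ker$ of the canonical projection $\A\to\A/\theta$, which is onto with exact codomain and hence lies in $\EUAlg_{\V}(\A)$. Now $\cg_e(\A)\subseteq\cg(\A)$ and, by hypothesis, $\cg(\A)$ is totally ordered by inclusion, so $\cg_e(\A)$ is totally ordered by $\supseteq$. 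Because comparability is preserved under the equivalence (condition~(2) of Lemma~\ref{Lemma:EquivPreorder}), $\EUAlg_{\V}(\A)$ is totally ordered by $\le$ whenever it is non-empty. This settles the first conclusion.

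Next I would compute the type of a chain. In $(\cg_e(\A),\supseteq)$ any two distinct elements are comparable, so every antichain, and hence every $\mu$-set, has at most one element. A singleton $\{\eta\}$ is complete exactly when $\theta\supseteq\eta$ for every $\theta\in\cg_e(\A)$, that is, when $\eta$ is the inclusion-least element of $\cg_e(\A)$. Consequently, if $\cg_e(\A)$ has an inclusion-least element then $\{\eta\}$ is a $\mu$-set and $\tp(\cg_e(\A))=1$; otherwise no $\mu$-set exists and $\tp(\cg_e(\A))=0$. In either case $\tp(\EUAlg_{\V}(\A))=\tp(\cg_e(\A))\in\{1,0\}$, which is the second conclusion.

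For the final claim, if $\A$ is simple then $\cg(\A)$ consists only of the identity and universal congruences, so it is a two-element chain and $\cg_e(\A)$ is a non-empty \emph{finite} chain whenever $\EUAlg_{\V}(\A)\neq\emptyset$. A non-empty finite chain always possesses an inclusion-least element, so $\tp(\cg_e(\A))=1$, and therefore $\EUAlg_{\V}(\A)$ is either empty or unitary. I do not anticipate a genuine obstacle: once Theorem~\ref{t:kerequiv} reduces the question to $(\cg_e(\A),\supseteq)$, the argument is purely order-theoretic. The only point requiring care is the direction of the preorder, since the abstract relation $\le$ corresponds to \emph{reverse} inclusion $\supseteq$; thus ``complete'' translates into cofinality downward in $\subseteq$, and the unique candidate for a $\mu$-set is the inclusion-least exact congruence.
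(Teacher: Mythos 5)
Your proof is correct and follows precisely the route the paper intends: the corollary is stated without an explicit proof as an immediate consequence of Theorem~\ref{t:kerequiv}, and your transport of the problem to the chain $(\cg_e(\A),\supseteq)$, where any $\mu$-set must be a singleton consisting of an inclusion-least exact congruence (hence type $1$ if such a least element exists, type $0$ otherwise, and type $1$ in the simple case since $\cg_e(\A)$ is then a non-empty finite chain), is exactly that argument. You also handle the one delicate point correctly, namely that completeness of a singleton under the reverse-inclusion preorder means being inclusion-\emph{least}, so that more general coexact unifiers correspond to smaller kernels.
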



\section{Case Studies} \label{s:case_studies}

Any unitary equational class (e.g., the class of Boolean algebras or the class of all algebras for some language) 
also has unitary exact type. Similarly, any finitary equational class will have unitary or finitary exact type. 
In particular, the class of Heyting algebras is finitary~\cite{Ghi99} and also has finitary exact type: 
the identity $x \lor y \eq \top$ has two most general exact unifiers as described in Example~\ref{e:equnif}. 
Minor changes to the original proofs that the classes of groups (see~\cite{AL94}) have infinitary unification type 
and modal algebras for the logic $\lgc{K}$ (see~\cite{Jer13}) have nullary unification type establish that the 
 same holds also for the exact types. The class of semigroups has infinitary unification type~\cite{Plo72} and, by considering again 
the set $\{x \cdot y \eq y \cdot x\}$, infinitary or nullary exact type; we have been unable so far to determine which of these holds, 
however.

Below we consider some more interesting cases where the type is known to change, collecting these results in Table~\ref{table}.

\begin{exa}[Distributive Lattices]
The class $\mathcal{D}$ of distributive lattices 
has nullary unification type~\cite{Ghi97}, but unitary exact type as all finitely presented distributive lattices are exact 
(see, e.g.,~\cite[Lemma~18]{CM13}). Similarly, the classes of bounded distributive lattices~\cite{Ghi97}, 
idempotent semigroups~\cite{Ba86}, De~Morgan algebras~\cite{BC201X}, and Kleene algebras~\cite{BC201X} 
are nullary, but because these classes are locally finite, they have at most -- and indeed, it can be shown via suitable cases, 
precisely -- finitary exact type.

In such cases, we may be able to obtain characterizations and algorithms for most exact unifiers of finite sets of identities. 
Consider again the class $\mathcal{D}$ of distributive lattices, where $\lang$ is the language of lattices and 
$\alg{2}=(\{0,1\},\wedge,\vee)$ is the two-element 
distributive lattice with $0< 1$. For each set of variables $X$ and each map $g\colon X\to \{0,1\}$, let us denote by 
$\bar{g}\colon\Fml(X)\to \alg{2}$, the unique homomorphism extending $g$. 
Recall that $\alg{2}$ generates the variety $\mathcal{D}$. Hence for $\a,\b\in\Fml(X)$, 
$h_{\mathcal{D}}(\a)=h_{\mathcal{D}}(\b)$ iff $\bar{g}(\a)=\bar{g}(\b)$ for all maps $g\colon X\to \{0,1\}$. 
So for a  substitution $\si\colon \Fml(X)\to \Fml(Y)$, 
\begin{equation}\label{eq:kernelDistLat}
\ker(h_{\mathcal{D}}\circ\si)=\bigcap\bigl\{\ker(\bar{g}\circ \si)\mid g\colon Y \to \{0,1\}\bigr\}.
\end{equation}
Let $\Si$ be a finite set of $\lang$-identities with $\var(\Si) = \{x_1,\ldots,x_n\} = X$, and consider
\[
S=\{f\colon X\to \{0,1\}\mid \bar{f}(\a)=\bar{f}(\b) \mbox{ for each }\a\eq\b\in\Si\}.
\]
Let $f_1,\ldots,f_m$ be an enumeration of the maps in $S$, and let $\{y_1,\ldots, y_m\}=Y$ be 
a set of $m$ distinct variables. Fix
\[
\f=\bigvee\{y_i\wedge y_j\mid 1\leq i<j\leq m\}.
\]
We define a substitution $\si\colon\Fml(X)\to \Fml(Y)$ by
\[
\si(x_i)=\f\vee\bigvee\{y_j\mid f_j(x_i)=1\}.
\] 
To see that $\si$ is a $\mathcal{D}$-unifier of $\Si$, we claim that $\bar{g}\circ \si\in S$ for each 
$g\colon Y\to \{0,1\}$.
Note first that if $g$ is the constant map $0$, then clearly $\bar{g}(\si(\a))=0=\bar{g}(\si(\b))$ for each 
$\a\eq\b\in\Si$, that is, $\bar{g}\circ \si\in S$. 
If there exists $k\in\{1,\ldots,m\}$ such that $g(y_i)=1$ iff $i=k$, then for  $i\in\{1,\ldots,n\}$, 
\[
\bar{g}( \si(x_i))=\bar{g}(\f)\vee\bigvee\{\bar{g}(y_j)\mid f_j(x_i)=1\}=0\vee\bigvee\{g(y_j)\mid f_j(x_i)=1\}=f_k(x_i),
\]
that is, $\bar{g}\circ \si=\bar{f}_k\in S$.  
 Finally, if $g(y_i)=g(y_j)=1$ for some $i\neq j$, then $\bar{g}(\f)=1$, and 
$\bar{g}\circ\si(\a)=\bar{g}\circ\si(\b)=1$ for each $\a\eq\b\in\Si$. Hence $\si$ is a  $\mathcal{D}$-unifier of $\Si$ 
and, by \eqref{eq:kernelDistLat},
\[
\ker(h_{\mathcal{D}}\circ \si)=\bigcap\{\ker(\bar{g}\circ \si)\mid g\colon Y \to \{0,1\}\}=\bigcap\{\ker(\bar{f})\mid f\in S\}.
\]
To see that $\si$ is the most exact  $\mathcal{D}$-unifier of $\Si$, let $\si'\colon\Fml(X)\to \FML$ 
be a $\mathcal{D}$-unifier of $\Si$ and let $Z$ be a finite subset of $\omega$ such that $\si'(\Fml(X))\subseteq\Fml(Z)$. 
Then, given a map $g\colon Z\to \{0,1\}$, it follows that $\bar{g}\circ\si'(\a)=\bar{g}\circ\si'(\b)$ for each  $\a\eq\b\in\Si$. 
Therefore $\{\bar{g}\circ\si'\mid g\colon Z\to \{0,1\} \}\subseteq \{\bar{f}\mid f\in S\}$. Using~\eqref{eq:kernelDistLat},
\[
\ker(h_{\mathcal{D}}\circ \si')=\bigcap\{\ker(\bar{g}\circ \si')\mid g\colon Z\to \{0,1\}\}\supseteq \bigcap\{\ker(\bar{f})\mid f\in S\}=\ker(h_{\mathcal{D}}\circ \si).
\]
Hence $\si'\eleq_\mathcal{D} \si$.

\end{exa}

\begin{exa}[Pseudocomplemented Distributive Lattices]
The equational class $\mathfrak{B}_{\omega}$ of pseudocomplemented distributive 
lattices is the class of algebras $(B, \land, \lor, ^*, \bot, \top)$ such that $(B, \land, \lor, \bot, \top)$ is a 
bounded distributive lattice and $a \land b^* = a$ iff $a \land b = \bot$ for all $a,b \in B$. 
For each $n\in \mathbb{N}$, let $\B_n=(B_n,\wedge,\vee,^{*},\bot,\top)$ denote the finite Boolean algebra 
with $n$ atoms and let $\B_n'$ be the algebra obtained by adding a new top~$\top'$ to the underlying lattice 
of $\B_n$ and endowing it with the unique operation $^{*}$ making it into a pseudocomplemented 
distributive lattice. Let $\mathfrak{B}_{n}$ denote the subvariety of $\mathfrak{B}_{\omega}$ generated by $\B_n'$.
It was proved by Lee in~\cite{Lee70} that the subvariety lattice of $\mathfrak{B}_{\omega}$ is 
\[
\mathfrak{B}_{0}\subsetneq\mathfrak{B}_{1}\subsetneq\cdots \subsetneq \mathfrak{B}_{n}\subsetneq\cdots 
\subsetneq \mathfrak{B}_{\omega},
\]
where $\mathfrak{B}_{0}$ and $\mathfrak{B}_{1}$ are the varieties of Boolean algebras and  Stone algebras, respectively.
We have already observed that the variety $\mathfrak{B}_{0}$ of Boolean algebras has unitary exact type.
The case $\mathfrak{B}_{1}$ of Stone algebras is similar to  distributive lattices: $\mathfrak{B}_1$ 
has nullary unification type~\cite{Ghi97}, but all finitely presented Stone algebras are exact 
(see~\cite[Lemma~20]{CM13}), so $\mathfrak{B}_1$ has unitary exact type.

In \cite{Ghi97} it was proved that $\mathfrak{B}_{\omega}$ has nullary unification type, 
and the same result was proved in \cite{CaXX} for $\mathfrak{B}_n$ for each $n\geq2$. 
However, all these varieties are locally finite, so an application of Corollary~\ref{Cor:LocFin} proves that they 
have at most finitary exact type. 
It is easy to prove that $\{x\vee\neg x\eq \top\} \imp \{x \eq \top,\neg x\eq\top\}$ is $\mathfrak{B}_{\omega}$-admissible 
and $\mathfrak{B}_{n}$-admissible for each $n\geq 2$ and that neither
$\{x\vee\neg x\eq \top\} \imp x\eq \top$  nor 
$\{x\vee\neg x\eq \top\} \imp \neg x\eq\top$ is $\mathfrak{B}_{\omega}$-admissible or 
$\mathfrak{B}_{n}$-admissible with $n\geq 2$. So, using Proposition~\ref{prop:admred}, 
the classes $\mathfrak{B}_{\omega}$ and $\mathfrak{B}_{n}$ with $n\geq 2$ have finitary exact type. 
\end{exa}

\begin{exa}[Willard's Example] 
The following example of a locally finite equational class with infinitary unification type 
is due to R.~Willard (private communication). Consider a language with one binary 
operation, written as juxtaposition, and two constants $0$ and $1$. 
Let $\V$ be the  equational class  defined by 
\[
0x \eq x0 \eq 0, \qquad 1 x \eq 0, \qquad x(yz) \eq 0,
\]
and, for each $n \in \mathbb{N}$, associating to the left,
\[
xyz_1z_2\ldots z_n y \eq xyz_1z_2\ldots z_n 1.
\]
Then up to equivalence in $\V$, formulas have a normal form (again associating to the left)
\[
0, \quad 1, \quad \mbox{or} \quad xy_1y_2\ldots y_n
\]
where $x$ is any variable, $y_1,\ldots,y_n$  are variables or $1$, and $y_i=y_j\neq 1$ implies $i=j$. 
It is immediate that finitely generated free algebras are finite and hence that $\V$ is 
locally finite. Note also that $\{xy \eq 0\}$ has three most general exact unifiers 
\[
\si_1(x) = 1,\ \si_1(y) = y; \quad \si_2(x) = 0,\ \si_2(y) = y;\quad \si_3(x) = x,\ \si_3(y) = yz.
\]
So the exact type of $\V$ is finitary.

However, the following set of identities has infinitary unification type:
\[
\Si = \{xy \eq x1\}.
\]
Observe that $\si(x) = x$, $\si(y) = 1$ is a $\V$-unifier of $\Si$, as are, for 
each $n \in \mathbb{N}$ and distinct variables $z_1,\ldots,z_n$ different from $y$, 
\[
\si_n(x) = xyz_1 \ldots z_n, \quad \si_n(y) = y.
\]
Moreover, the set $\{\si_n \mid n \in \mathbb{N}\} \cup \{\si\}$ is a $\mu$-set for $\U{\V}{\Si}$. 

Finally, no finite set of identities has nullary unification type. To see this, it suffices to show that the set of substitutions over some finite 
set of variables $X$ preordered by $\leqnv$ contains no infinite strictly increasing chains. Intuitively, this is because applying a 
substitution to a  formula in normal form either produces a formula of greater or equal length or a formula equivalent to $0$. 
More formally, consider some substitution $\si$ over $X = \{x_1,\ldots,x_n\}$.  We prove that $\si$ does not form part of an infinite 
strictly increasing chain by induction on the number $k$ of variables $x$ in $X$ such that $\si(x)$ is equivalent to $0$. For the base 
case $k=0$, if $\si \leqnv \si'$, then the length of the normal form of $\si'(x_i)$ must  be smaller than or equal to the length of the 
normal form of $\si(x_i)$. As there are finitely many non-equivalent (up to the names of the variables) such strings of characters, there 
are finitely many non-equivalent possible $\si'$ more general than $\si$. For the inductive step, we suppose that $\si(x_i)$ is equivalent 
to $0$ and assume for a contradiction that $\si$ forms part of an infinite strictly increasing chain of substitutions. 
Observe that $\si'(x_i)$ must be equivalent to $0$ for every $\si'$ above $\si$ in the chain; otherwise, by the induction 
hypothesis applied to $\si'$, the chain is finite, contradicting our assumption.  
But then we can construct another strictly increasing 
infinite chain of substitutions extending $\si$ by setting $\si'(x_i)=z$ for a fresh variable $z$, for each $\si'$ in the original chain,  contradicting the induction hypothesis.
\end{exa}

\begin{exa}[MV-Algebras]
It was proved in~\cite{MS13}  that the equational class $\mathcal{MV}$ of MV-algebras, the algebraic semantics of \L ukasiewicz 
infinite-valued logic (see~\cite{COM99} for details), has nullary unification type. 
This class  is not locally finite, so Corollary~\ref{Cor:LocFin} does not apply, but combining results 
from~\cite{Jer09b} and~\cite{Cab}, we can still prove that it has finitary exact type. 

Let $\lang$ be the language of MV-algebras and let $\Si$ be a finite set of $\lang$-identities. 
Finitely presented MV-algebras admit a presentation  
$\{\a\eq \top\}$, so there is no loss of generality in assuming that $\Si=\{\a\eq \top\}$. Let us fix $X=\var(\Si)$ 
and $\A=\Fp_{\mathcal{MV}}(\{\a\eq\top\},X)$.
A combination of  \cite[Theorem~3.8]{Jer09b} and \cite[Theorem~4.18]{Cab} establishes that 
there exist $\b_1,\ldots,\b_n\in\Fml(X)$ such that 
\begin{enumerate}[label=(\roman*)]
\item  $\{\a\eq\top\}\imp\{\b_1\eq\top,\ldots,\b_n\eq\top\}$ is $\mathcal{MV}$-admissible;
\item $\mathcal{MV} \models \{\b_i\eq\top\} \imp \a\eq\top$ for each $i\in\{1,\ldots,n\}$;
\item $\Fp_{\mathcal{MV}}(\{\b_i\eq\top\},X)$ is exact for each $i\in\{1,\ldots,n\}$.
\end{enumerate}
Defining $\B_i=\Fp_{\mathcal{MV}}(\{\b_i\eq\top\},X)$, from (ii), we obtain that for each $i\in\{1,\ldots,n\}$, there exists a homomorphism 
$e_i\colon\A\to \B_i$ such that $\rho_{(\{\b_i\eq\top\},X,\mathcal{MV})}=e_i\circ\rho_{(\{\a\eq\top\},X,\mathcal{MV})}$. As 
$\rho_{(\{\b_i\eq\top\},X,\mathcal{MV})}$ is onto, so is $e_i$. 
By (iii), it follows that $S=\{e_1,\ldots,e_n\}$ is a set of coexact $\mathcal{MV}$-unifiers of $\A$. 

We claim now that $S$ is a complete set in $\EUAlg_{\mathcal{MV}}(\A)$. Consider 
$e\colon \A\to\alg{C}\in\EUAlg_{\mathcal{MV}}(\A)$. By~{(i)}, there exists $i\in\{1,\ldots,n\}$ and $h\colon \B_i\to \alg{C}$ 
such that $e\circ\rho_{(\{\a\eq\top\},X,\mathcal{MV})}=h\circ\rho_{(\{\b_i\eq\top\},X,\mathcal{MV})}$. 
As $\rho_{(\{\a\eq\top\},X,\mathcal{MV})}$ is onto and $\rho_{(\{\b_i\eq\top\},X,\mathcal{MV})}=e_i\circ\rho_{(\{\a\eq\top\},X,\mathcal{MV})}$, 
it follows that $e=h\circ e_i$, that is, $e\leq e_i$. 
This proves that $\tp(\EUAlg_{\mathcal{MV}}(\A))\in\{1,\omega\}$. Hence  the exact type of $\mathcal{MV}$ is either unitary or finitary. 
But also $\{x \lor \lnot x \eq \top\}$ has a $\mu$-set $\{\si_1,\si_2\}$ where $\si_1(x) = \top$ and $\si_2(x) = \bot$. (Reasoning in the 
standard MV-algebra over $[0,1]$, there are only two continuous functions $f \colon [0,1] \to [0,1]$ satisfying 
$\max(f(\lambda),1-f(\lambda)) = 1$ for each $\lambda\in[0,1]$, 
namely $f = 1$ and $f = 0$.) So $\mathcal{MV}$ has finitary exact type. 
\end{exa}

\begin{table}[t]
\begin{center}
\begin{tabular}{|@{\ \ }c@{\ \ }|@{\ \ }c@{\ \ }|@{\ \ }c|}
\hline
Equational Class			& Unification Type 	& 	Exact Type\\
\hline\hline&&\\[-.35cm]
Boolean Algebras				&	Unitary		&	Unitary\\
\hline&&\\[-.35cm]
Heyting Algebras				&	Finitary		&	Finitary\\
\hline&&\\[-.35cm]
Groups					&	Infinitary		&	Infinitary\\
\hline&&\\[-.35cm]
Semigroups					&	Infinitary		&	Infinitary or Nullary\\
\hline&&\\[-.35cm]
Modal Algebras 				&	Nullary		&	Nullary\\
\hline 
&&\\[-.35cm]
Distributive Lattices			&	Nullary		&	Unitary\\
\hline&&\\[-.35cm]
Stone Algebras				&	Nullary		&	Unitary\\		
\hline&&\\[-.35cm]
Bounded Distributive Lattices	&	Nullary		&	Finitary\\
\hline&&\\[-.35cm]
Pseudocomplemented Distributive Lattices	& Nullary	& Finitary\\
\hline&&\\[-.35cm]
Idempotent Semigroups		&	Nullary		&	Finitary\\
\hline&&\\[-.35cm]
De Morgan Algebras		&	Nullary		&	Finitary\\
\hline&&\\[-.35cm]
Kleene Algebras		&	Nullary		&	Finitary\\
\hline&&\\[-.35cm]
MV-Algebras					&	Nullary		&	Finitary\\
\hline&&\\[-.35cm]
Willard's Example					&	Infinitary		&	Finitary\\
\hline
\end{tabular}
\end{center}
\caption{Comparison of unification types and exact types}
\label{table}
\end{table}


\section{Concluding Remarks} \label{s:concluding_remarks}

In this paper, we have introduced  a new hierarchy of exact types based on an inclusion preordering of 
unifiers, and shown, via an algebraic interpretation of unifiers, that certain classes have 
nullary or infinitary unification type, but unitary or finitary  exact type. We do not know, however, if there exist 
equational classes of (i) finitary unification type that have unitary exact type, (ii) infinitary unification type 
that have unitary or nullary exact type, or (iii) nullary unification type that have infinitary exact type.

In \cite{CM13}, the current authors have presented axiomatizations for the
admissible rules of several locally finite (and hence at most finitary exact type) equational classes 
with nullary unification type. In all these cases, a complete description of exact algebras, and the 
unitary or finitary exact  type plays a central (if implicit) role. We 
expect that this approach will also be useful for addressing admissibility in other classes of algebras that have unitary or finitary 
exact type, but nullary or infinitary unification type: e.g., the locally finite equational classes of pseudocomplemented 
distributive lattices (see~\cite{CaXX}) and Sugihara algebras, the algebraic semantics of the relevant logic R-Mingle (see~\cite{Met15}). 
Note, however, that the most significant open problems in this area concern the decidability and axiomatization of 
unifiability and admissibility in the modal logic $\lgc{K}$, where the exact type remains nullary.

Finally, although it is possible, as in the case of distributive lattices above, to obtain algorithms for building 
a (finite) set of most general exact unifiers for a finite set of identities, we do not yet have a general method, 
even for locally finite equational classes. Here the problem is that we may be able to construct the congruence lattice 
of the relevant algebra but we do not know how to decide if the quotient of the algebra by a particular 
congruence embeds into the free algebra on countably infinitely many generators of the class.

\bibliographystyle{plain}

\end{document}